\theoremstyle{definition}
\newtheorem{theorem}{Theorem}
\DeclareMathOperator*{\argmin}{arg\,min}
\DeclareMathOperator*{\argmax}{arg\,max}
\pgfplotsset{compat=1.15}
\def\ben{\begin{eqnarray*}}
\def\een{\end{eqnarray*}}
\def  \H5G{H_\text{CA-Polar}}
\begin{document}


\title{On the Role of Quantization of Soft Information in GRAND}



\author{\IEEEauthorblockN{Peihong Yuan, \IEEEmembership{Member, IEEE}, Ken R. Duffy, \IEEEmembership{Member, IEEE}, Evan P. Gabhart, \IEEEmembership{Student Member, IEEE}, \\and Muriel M\'edard, \IEEEmembership{Fellow, IEEE}}, 
	\thanks{This work was supported by the Defense Advanced Research Projects Agency under Grant HR00112120008.}
	\thanks{Peihong Yuan, Evan P. Gabhart and Muriel M\'edard are with the Research Laboratory of Electronics, Massachusetts Institute of Technology, Cambridge MA 02139 USA (e-mail: phyuan@mit.edu; egabby97@mit.edu; medard@mit.edu).}
	\thanks{Ken R. Duffy is with Hamilton Institute, Maynooth University, W23 Maynooth, Ireland (e-mail: ken.duffy@mu.ie).}
}

\begin{acronym}
    \acro{BCH}{Bose-Chaudhuri-Hocquenghem}
    \acro{GRAND}{guessing random additive noise decoding}
    \acro{SGRAND}{soft GRAND}
    \acro{DSGRAND}{discretized soft GRAND}
    \acro{SRGRAND}{symbol reliability GRAND}
    \acro{ORBGRAND}{ordered reliability bits GRAND}
    \acro{5G}{the $5$-th generation wireless system}
	\acro{APP}{a-posteriori probability}
	\acro{ARQ}{automated repeat request}
	\acro{ASK}{amplitude-shift keying}
	\acro{AWGN}{additive white Gaussian noise}
	\acro{B-DMC}{binary-input discrete memoryless channel}
	\acro{BEC}{binary erasure channel}
	\acro{BER}{bit error rate}
	\acro{biAWGN}{binary-input additive white Gaussian noise}
	\acro{BLER}{block error rate}
	\acro{uBLER}{undetected block error rate}
	\acro{bpcu}{bits per channel use}
	\acro{BPSK}{binary phase-shift keying}
	\acro{BSC}{binary symmetric channel}
	\acro{BSS}{binary symmetric source}
	\acro{CDF}{cumulative distribution function}
	\acro{CRC}{cyclic redundancy check}
	\acro{DE}{density evolution}
	\acro{DMC}{discrete memoryless channel}
	\acro{DMS}{discrete memoryless source}
	\acro{BMS}{binary input memoryless symmetric}
	\acro{eMBB}{enhanced mobile broadband}
	\acro{FER}{frame error rate}
	\acro{uFER}{undetected frame error rate}
	\acro{FHT}{fast Hadamard transform}
	\acro{GF}{Galois field}
	\acro{HARQ}{hybrid automated repeat request}
	\acro{i.i.d.}{independent and identically distributed}
	\acro{LDPC}{low-density parity-check}
	\acro{LHS}{left hand side}
	\acro{LLR}{log-likelihood ratio}
	\acro{MAP}{maximum-a-posteriori}
	\acro{MC}{Monte Carlo}
	\acro{ML}{maximum-likelihood}
	\acro{PDF}{probability density function}
	\acro{PMF}{probability mass function}
	\acro{QAM}{quadrature amplitude modulation}
	\acro{QPSK}{quadrature phase-shift keying}
	\acro{RCU}{random-coding union}
	\acro{RHS}{right hand side}
	\acro{RM}{Reed-Muller}
	\acro{RV}{random variable}
	\acro{RS}{Reed–Solomon}
	\acro{SCL}{successive cancellation list}
	\acro{SE}{spectral efficiency}
	\acro{SNR}{signal-to-noise ratio}
	\acro{UB}{union bound}
	\acro{BP}{belief propagation}
	\acro{NR}{new radio}
	\acro{CA-SCL}{CRC-assisted successive cancellation list}
	\acro{DP}{dynamic programming}
	\acro{URLLC}{ultra-reliable low-latency communication}
\end{acronym}

\maketitle

\begin{abstract}
In this work, we investigate guessing random additive noise decoding (GRAND) with quantized soft input. First, we analyze the achievable rate of ordered reliability bits GRAND (ORBGRAND), which uses the rank order of the reliability as quantized soft information. We show that multi-line ORBGRAND can approach capacity for any signal-to-noise ratio (SNR). 
We then introduce discretized soft GRAND (DSGRAND), which uses information from a conventional quantizer. Simulation results show that DSGRAND well approximates maximum-likelihood (ML) decoding with a number of quantization bits that is in line with current soft decoding implementations. For a $(128,106)$ CRC-concatenated polar code, the basic ORBGRAND is able to match or outperform CRC-aided successive cancellation list (CA-SCL) decoding with codeword list size of $64$ and $3$ bits of quantized soft information, while DSGRAND outperforms CA-SCL decoding with a list size of $128$ codewords. Both ORBGRAND and DSGRAND exhibit approximately an order of magnitude less average complexity and two orders of magnitude smaller memory requirements than CA-SCL.
\end{abstract}

\begin{IEEEkeywords}
GRAND, Soft Decision, Quantization, Statistical Model
\end{IEEEkeywords}
\markboth
	{submitted to IEEE Transactions on Signal Processing}
	{}

\section{Introduction}

As \ac{ML} error correcting decoding of linear codes is NP-complete~\cite{berlekamp1978inherent},
the engineering paradigm has been to co-design restricted classes of linear codebooks with code-specific decoding methods that exploit the code structure to enable computationally efficient approximate-\ac{ML} decoding. For example, \ac{BCH} codes with hard detection Berlekamp-Massey decoding~\cite{berlekamp1968algebraic,massey1969shift}, \ac{LDPC} codes~\cite{gallager1963low} and \ac{BP}~\cite{chen2002near}, polar codes with \ac{CRC} codes, which have been selected for all control channel communications in \ac{5G} \ac{NR}, and \ac{CA-SCL} decoding~\cite{niu2012crc,tal2015list,balatsoukas2015llr,leonardon2019fast}.

Modern applications, including augmented and virtual reality, vehicle-to-vehicle communications, the Internet of Things, and machine-type communications, have driven demand for reliable, low-atency communications~\cite{durisi2016toward,she2017radio,chen2018ultra,parvez2018survey,medard20205}. These technologies benefit from short (for latency) and high-rate (for efficiency) codes, reviving the possibility of creating high-accuracy universal decoders that are suitable for hardware implementation. Accurate, practically realizable universal decoders offer the possibility of reduced hardware footprint, the provision of hard or soft detection decoding for codes that currently only have one class of decoder, and future-proofing devices  against the introduction of new codes.

\Ac{GRAND} is a recently established universal decoder that was originally introduced for hard decision demodulation systems~\cite{Duffy18,duffy19GRAND}. \Ac{GRAND} algorithms operate by sequentially removing putative noise effects from the demodulated received sequence and querying if what remains is in the codebook. The first instance where a codebook member is found is the decoding. If \ac{GRAND} queries noise effects from most likely to least likely based on available hard or soft information, it identifies a \ac{ML} decoding. Amongst other results, in the hard detection setting, mathematical analysis of \ac{GRAND} determines error exponents for a broad class of additive noise models~\cite{duffy19GRAND}.

For an $(n,k)$ code, where $k$ information bits are transformed into $n$ coded bits for
communication, all \ac{GRAND} algorithms identify an erroneous decoding after an approximately geometrically distributed number of codebook queries with mean $2^{n-k}$~\cite[Theorem 2]{duffy19GRAND} and correctly decode if they identify a codeword beforehand. As a result, an upper bound on the complexity of all \ac{GRAND} algorithms is determined by the number of redundant bits rather than the code length or rate directly, making them suitable for decoding any moderate redundancy code of any length. The performance difference between \ac{GRAND} variants stems from their utilisation of statistical noise models or soft information to improve the targeting of their queries.

The evident parallelizability of hard detection \ac{GRAND}'s codebook queries have already resulted in the proposal~\cite{abbas2020} and realization~\cite{Riaz21,Riaz22} of efficient circuit implementations for \acp{BSC}. An algorithm has also been introduced for channels subject to bursty noise whose statistical characteristics are known to the receiver~\cite{an20,an2022keep}, which has also resulted in proposed circuit implementations~\cite{abbas2021high-MO,zhan2021noise}. 

A natural question is how to make use of soft detection information, when it is available, in order to improve \ac{GRAND}'s query order and several proposals have been made. \Ac{SGRAND}~\cite{solomon20SGRAND} uses real-valued soft information to build a bespoke noise-effect query order for each received signal and provides a benchmark for optimal decoding accuracy performance. It is possible to create a semi-parallelizable implementation in software using dynamic max-heap structures, but the requirement for substantial dynamic memory does not lend itself to efficient implementation in hardware.

Quantization of the soft information provides a trade-off between performance and implementation complexity. \Ac{SRGRAND}~\cite{Duffy19a,Duffy22} avails of the most limited quantized soft information where one additional bit tags each symbol as being reliably or unreliably received. \Ac{SRGRAND} is mathematically analysable, implementable in hardware, and provides a $0.5-0.75$~dB gain over hard detection \ac{GRAND}~\cite{Duffy22}. 

\Ac{ORBGRAND}~\cite{duffy2021ordered} aims to bridge the gap between \ac{SRGRAND} and \ac{SGRAND} by obtaining the decoding accuracy close to the latter in an algorithm that is suitable for implementation in circuits. For a block code of length $n$, it uses $\lceil\log_2(n)\rceil$ bits of codebook-independent quantized soft detection information per received bit, the rank order of each received bit's reliability, to determine an accurate decoding. It retains the hard detection algorithm's suitability for a highly parallelized implementation in hardware and high throughput VLSI designs have been proposed in~\cite{abbas2021high,condo2021high,condo2022fixed,abbas22}. Moreover, theoretical results indicate that \ac{ORBGRAND} is almost capacity achieving in certain \ac{SNR} regimes~\cite{liu2022orbgrand}. 
It has been observed that \ac{ORBGRAND} provides near-ML decoding for \acp{BLER} greater than $10^{-4}$, but is less precise at higher \ac{SNR}. The latter effect can be explained by the fact that \ac{ORBGRAND}'s noise effect query order diverges from the optimal rank order in terms of decreasing likelihood. To rectify this, a list decoding approach to improve \ac{ORBGRAND}'s performance at higher \ac{SNR} has been suggested~\cite{abbas2021list}. A more accurate statistical model based on multi-line fitting of rank orders from which an enhanced query order can be determined has also been introduced~\cite{duffy22ORBGRAND}.

In this work, we start with the achievable rate of \ac{ORBGRAND}. We construct a mismatched demapping function to approximate the reliability sorting operation, and then calculate the achievable rate of the corresponding mismatched decoding. We show that multi-line \ac{ORBGRAND} can be capacity achieving at all \acp{SNR} as long as a sufficiently rich multi-line fitting is employed, expanding upon the results of~\cite{liu2022orbgrand} that show that the basic version of \ac{ORBGRAND}~\cite{duffy2021ordered} is near-capacity achieving in certain SNR regimes. The number of lines in practice we show to be quite moderate to achieve near-optimal performance.

We also propose \ac{DSGRAND}, an alternative efficient algorithm based on \ac{DP} for soft detection error correction decoding with \ac{GRAND} that is an approximation to \ac{SGRAND}. In contrast to \ac{ORBGRAND}, \ac{DSGRAND} uses conventional quantizers and does not require received bits to be rank-ordered by their reliability, which can be energy expensive when implemented in circuits. \Ac{DSGRAND} can make use of any level of quantized soft detection information, with increasingly accurate performance as the number of soft information bits per received bit increases. Quantizer optimization for \Ac{DSGRAND} yields to theoretical analysis and heuristic designs, both of which we discuss. Empirical results show \Ac{DSGRAND}  outperforms the original basic \ac{ORBGRAND} at high \ac{SNR} and performs within $0.25~\text{dB}$ and $0.1~\text{dB}$ close to \ac{ML} decoding with $2$ and $3$ bits quantizers, respectively. We also provide the comparison of performance and complexity between \ac{GRAND} variants and \ac{CA-SCL} decoding for a $(128,106)$ 5G \ac{CRC}-concatenated polar code. With a $3$ bits quantizer, proposed \ac{DSGRAND} performs approximately $0.2~\text{dB}$ better than a \ac{CA-SCL} decoder ($L=128$) with significantly lower average complexity and memory requirements.

This paper is organized as follows. Section~\ref{sec:prelim} gives background on the problem. The achievable rate of \ac{ORBGRAND} is discussed in Section~\ref{sec:orbgrand}. Section~\ref{sec:qgrand} presents the proposed \ac{DSGRAND} algorithm with pseudo code and the quantizer optimization. Simulation results are shown in Section~\ref{sec:numerical}. Section~\ref{sec:conclusions} concludes the paper.

\section{Preliminaries}\label{sec:prelim}
In this paper, vectors are written as $x^n=\left(x_1,x_2,\dots,x_n\right)$. The all-zeros vector of dimension $n$ is $0^n$. The $i$-th entry of $x^n$ is $x_i$. A \ac{RV} is written with an uppercase letter such as $X$. A realization of $X$ is written with the corresponding lowercase letter $x$. A vector of \ac{RV}s is written as $X^n=\left(X_1,X_2,\dots,X_n\right)$. Both the \ac{PDF} of a continuous \ac{RV} and the \ac{PMF} of a discrete \ac{RV} evaluated at $x$ are written as $f_X(x)$. $F_X(x)$ denotes the \ac{CDF} of \ac{RV} $X$. The bit-wise \enquote{exclusive or} operation is written as $\oplus$. For functions and operations defined with scalar inputs, we use them with vector inputs as their element-wise version, i.e.,
\begin{align*}
f(x^n) &= \left(f(x_1),f(x_2),\dots,f(x_n)\right),\\
x^n \oplus y^n &= \left(x_1 \oplus y_1,x_2 \oplus y_2,\dots,x_n \oplus y_n\right).
\end{align*}

\subsection{Channel Model and ML Decoding}
Consider a communication system using a binary $(n,k)$ linear block code. The codeword $c^n$ is \ac{BPSK} modulated with unit power via
\begin{align}\label{eq:bpsk}
    f_\text{BPSK}(c_i) = 1-2c_i,~i=1,\dots,n
\end{align}
and transmitted over a memoryless \ac{AWGN} channel, i.e.,
\begin{align}\label{eq:biawgn}
    Y = f_\text{BPSK}(C) + \sigma N
\end{align}
where $N$ is an independent zero mean Gaussian noise with variance one. The \ac{SNR} is given by
\begin{align}
    \frac{E_s}{N_0}=\frac{1}{\sigma^2},\quad\frac{E_s}{N_0}=\frac{E_b}{N_0}\cdot \frac{2k}{n}.
\end{align}

At the receiver, we have the symbol-wise \ac{LLR} of $c_i$ based on the observation $y_i,~i=1,\dots,n$
\begin{align}
    \tau(y_i) \triangleq\log\frac{f_{Y|C}(y_i|0)}{f_{Y|C}(y_i|1)}=\frac{2y_i}{\sigma^2},~i=1,\dots,n.
    \label{eq:LLR}
\end{align}
We also have the hard decision $\tilde{c}_i$ and the symbol-wise reliabilities $\ell_i$ of the coded bit $c_i$,
\begin{align}
    \tilde{c}_i \triangleq \frac{1-\text{sign}\left(y_i\right)}{2},\quad \ell_i\triangleq\left|\tau(y_i)\right|=\frac{2\left|y_i\right|}{\sigma^2}.
\end{align}
A block-wise \ac{ML} decoder puts out the codeword
\begin{align}
    \hat{c}^n &= \argmax_{c^n\in\mathcal{C}} f_{Y^n|C^n}\left(y^n|c^n\right)\\
    &= \argmax_{c^n\in\mathcal{C}} \prod_{i=1}^n f_{Y|C}\left(y_i|c_i\right)\\
    &=\argmax_{c^n\in\mathcal{C}} \sum_{i=1}^n f_\text{BPSK}(c_i) \cdot \tau\left(y_i\right)\\
    &=\argmax_{c^n\in\mathcal{C}} \sum_{i=1}^n (1-2c_i) \cdot \tau\left(y_i\right).\label{eq:MLD}
\end{align}

\subsection{Reliability quantization}\label{sec:quantizer}
A conventional $q$ bits ($Q=2^q$-level) reliability quantizer $\mu(\cdot):[0,\infty)\mapsto[0,\infty)$ maps a positive input value in range $[b_{i-1},b_i)$ to a output value $v_i$ for $i=1,\dots,Q$, where $b_0=0$ and $b_Q=+\infty$, i.e.,
\begin{equation}\label{eq:quantizer}
  \mu(x)=\left\{
    \begin{aligned}
      v_1, &\text{ if } x\in\left[0,b_1\right)\\
      v_2, &\text{ if } x\in\left[b_1,b_2\right)\\
      \vdots &\\
      v_Q, &\text{ if } x\in\left[b_{Q-1},+\infty\right).
    \end{aligned}\right.
\end{equation}
Note that $q$ bits reliability quantization equivalently quantizes the \ac{LLR} $\tau(y)$ and the channel output $y$ with $q+1$ bits, i.e., one extra bit is required for the sign (or equivalently the hard decision $\tilde{c}$). To simplify  explanations in future sections, we define the equivalent quantizer $\bar{\mu}(\cdot)$ for \acp{LLR},
\begin{equation}
  \bar{\mu}(\tau(y))\triangleq\left\{
    \begin{aligned}
      \mu(\tau(y)), &\text{ if } \tau(y)\geq 0\\
      -\mu(-\tau(y)), &\text{ otherwise }
    \end{aligned}\right.
\end{equation}
and $\tilde{\mu}(\cdot)$ for channel output $y$,
\begin{equation}
  \tilde{\mu}(y)\triangleq \bar{\mu}\left(\frac{2y}{\sigma^2}\right)\cdot \frac{\sigma^2}{2}.
\end{equation}

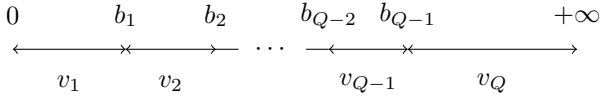
\begin{figure}[t]
	\centering
	\begin{tikzpicture}[scale=1.5]

\draw[<->] (0,0) to (1,0);
\draw[<->] (1,0) to (1.8,0);
\draw[] (1.8,0) to (2.0,0);
\node at (2.3,0) {$\cdots$};
\draw[] (2.6,0) to (2.8,0);
\draw[<->] (2.8,0) to (3.5,0);
\draw[<->] (3.5,0) to (5,0);

\node at (0,0.3) {$0$};
\node at (1,0.3) {$b_1$};
\node at (1.8,0.3) {$b_2$};
\node at (2.8,0.3) {$b_{Q-2}$};
\node at (3.5,0.3) {$b_{Q-1}$};
\node at (5,0.3) {$+\infty$};

\node at (0.5,-0.3) {$v_1$};
\node at (1.4,-0.3) {$v_2$};
\node at (3.15,-0.3) {$v_{Q-1}$};
\node at (4.25,-0.3) {$v_{Q}$};

\end{tikzpicture}
	\caption{Input-output relation of a quantizer for positive real numbers.}
	\label{fig:quantizer}
\end{figure}

\subsection{GRAND variants}\label{sec:GRANDs}
All \ac{GRAND}~\cite{duffy19GRAND} algorithms seek to identify the noise effect 
\begin{align}\label{eq:noiseeffect}
    z^n\triangleq c^n\oplus\tilde{c}^n,
\end{align}
that has impacted the transmission, from which the decoded codeword is deduced. \Ac{GRAND} creates binary error patterns $e^n$ rank-ordered by score $S(e^n)$ to find a valid codeword, i.e., $\hat{c}^n=\tilde{c}^n\oplus e^n \in\mathcal{C}$, i.e.,
\begin{align}\label{eq:GRAND}
    \hat{c}^n=\tilde{c}^n\oplus \hat{z}^n,\text{ where }\hat{z}^n &= \argmin_{e^n: \tilde{c}^n\oplus e^n\in\mathcal{C}} S(e^n)
\end{align}
where the score of an error pattern $e^n$ is defined by 
\begin{align}
    S(e^n)\triangleq\sum_{i=1}^n e_i\cdot w_i
\end{align}
where $w_i>0$ denotes the weight of the hard decision $\tilde{c}_i$, which could be also considered as the cost to flip $\tilde{c}_i$. Pseudo-code for GRAND is shown in Algorithm~\ref{alg:GRAND}. Note that scaling of the weights does not impact the decoding output of \acp{GRAND}, i.e.,
\begin{align}\label{eq:weightscaling}
    \argmin_{e^n: \tilde{c}^n\oplus e^n\in\mathcal{C}} \sum_{i=1}^n e_i\cdot\alpha\cdot w_i=\argmin_{e^n: \tilde{c}^n\oplus e^n\in\mathcal{C}} \sum_{i=1}^n e_i\cdot w_i,\text{~if~}\alpha>0.
\end{align}
\begin{algorithm}[t]
	\SetNoFillComment
	\DontPrintSemicolon
	\SetKwInOut{Input}{Input}\SetKwInOut{Output}{Output}
	\Input{hard decisions $\tilde{c}^n$, weight $w^n$}
	\Output{estimates $\hat{c}^n$, number of guesses $n_p$, decoding state $\phi$}
	\BlankLine
	$\phi\leftarrow\text{false}, n_p\leftarrow 1, e^n=0^n$\\
	\If{$\tilde{c}^n\in\mathcal{C}$}{
	    $\hat{c}^n\leftarrow \tilde{c}^n, \phi\leftarrow\text{true}$\\
	    \Return{}
	}
	\While{$\phi=\text{false}$}{
	    $e^n\leftarrow \text{next error pattern with least score $S(e^n)$}$\\
	    $n_p\leftarrow n_p+1$\\
	    \If{$\tilde{c}^n\oplus e^n \in\mathcal{C}$}{
    	   $\hat{c}^n\leftarrow \tilde{c}^n\oplus e^n,\phi\leftarrow\text{true}$\\
	       \Return{}
	    }
	}
	\Return{}
	\caption{$\text{GRAND}$}
	\label{alg:GRAND}
\end{algorithm}

In a hard decision BSC, \ac{GRAND}~\cite{Duffy18,duffy19GRAND} doesn't have any reliability information and thus uses weight $w_i=1$ for $i=1,\dots,n$, i.e., the score of error pattern $e^n$ is equal to its Hamming weight. If a bursty statistical channel characterization is available at the receiver, a Markov-informed order can be used to generate error patterns~\cite{an20,an2022keep}.

\Ac{SGRAND} uses the non-quantized reliability as the weight~\cite{solomon20SGRAND}, i.e.,
\begin{equation}
  w_i=\ell_i,\quad i=1,\dots,n.
\end{equation}

\begin{theorem}\label{th:sgrand}
SGRAND provides the \ac{ML} decision by
\begin{align}\label{eq:GRAND_ML}
    \hat{c}^n=\tilde{c}^n\oplus \hat{z}^n,\text{ where }\hat{z}^n &= \argmin_{e^n: \tilde{c}^n\oplus e^n\in\mathcal{C}} \sum_{i=1}^n e_i \cdot \ell_i.
\end{align}
\end{theorem}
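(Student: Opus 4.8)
The plan is to show that, after reparametrizing codewords by their noise effect relative to the hard-decision vector, the block-wise ML objective in \eqref{eq:MLD} differs from the GRAND score $S(e^n)=\sum_{i=1}^n e_i \ell_i$ only by an additive constant and an overall sign, so that the maximizer of the former corresponds to the minimizer of the latter.

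First I would write every candidate codeword as $c^n=\tilde{c}^n\oplus e^n$ with $e^n\in\{0,1\}^n$; this is a bijection between $\mathcal{C}$ and $\{e^n:\tilde{c}^n\oplus e^n\in\mathcal{C}\}$. The key elementary identity is $(1-2\tilde c_i)\,\tau(y_i)=\ell_i$: by definition $\tilde c_i=(1-\operatorname{sign}(y_i))/2$, so $1-2\tilde c_i=\operatorname{sign}(y_i)=\operatorname{sign}(\tau(y_i))$ since $\tau(y_i)=2y_i/\sigma^2$, and multiplying a real number by its own sign returns its absolute value $|\tau(y_i)|=\ell_i$. Because $c_i=\tilde c_i\oplus e_i$ flips $\tilde c_i$ exactly when $e_i=1$, we have $1-2c_i=(1-2e_i)(1-2\tilde c_i)$, hence $(1-2c_i)\,\tau(y_i)=(1-2e_i)\,\ell_i=\ell_i-2e_i\ell_i$.

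Summing over $i=1,\dots,n$ yields $\sum_{i=1}^n(1-2c_i)\tau(y_i)=\sum_{i=1}^n \ell_i-2\sum_{i=1}^n e_i\ell_i$, and the first term is independent of the codeword. Hence maximizing the ML metric over $c^n\in\mathcal{C}$ is equivalent to minimizing $\sum_{i=1}^n e_i\ell_i$ over $\{e^n:\tilde c^n\oplus e^n\in\mathcal{C}\}$, which together with \eqref{eq:MLD} gives \eqref{eq:GRAND_ML}. It remains to observe that GRAND run with weights $w_i=\ell_i$ actually outputs this minimizer: Algorithm~\ref{alg:GRAND} enumerates error patterns in nondecreasing order of $S(e^n)=\sum_{i=1}^n e_i\ell_i$ and stops at the first $e^n$ for which $\tilde c^n\oplus e^n\in\mathcal{C}$, so the returned pattern attains $\min_{e^n:\tilde c^n\oplus e^n\in\mathcal{C}} S(e^n)$.

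There is essentially no hard step here; the only things to handle with care are the sign bookkeeping in $(1-2c_i)\tau(y_i)=(1-2e_i)\ell_i$ and, if one wishes to be pedantic, ties in the scores — an event of probability zero under the AWGN model of \eqref{eq:biawgn}, and in any case broken consistently by whichever fixed enumeration order Algorithm~\ref{alg:GRAND} uses, which can then be taken as the tie-break rule for ML as well.
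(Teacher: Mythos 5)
Your proof is correct and follows essentially the same route as the paper: both establish the per-coordinate identity $(1-2c_i)\tau(y_i)=(1-2e_i)\ell_i$ and then convert the ML argmax into the argmin of $\sum_i e_i\ell_i$. The only cosmetic difference is that you derive the identity via the sign-product relation $1-2(\tilde c_i\oplus e_i)=(1-2\tilde c_i)(1-2e_i)$ where the paper checks the two cases $\tilde c_i\in\{0,1\}$ explicitly; your added remarks on the codeword-independent constant $\sum_i\ell_i$ and on tie-breaking are harmless refinements of the same argument.
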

\begin{proof} We check both possibilities $\tilde{c}_i=0,1$ for eq.~\eqref{eq:MLD},
\begin{itemize}
    \item If $\tau\left(y_i\right)\geq 0$ and $\tilde{c}_i=0$,
\begin{align}
(1-2c_i)\cdot \tau\left(y_i\right)&=(1-2(\tilde{c}_i\oplus e_i))\cdot \tau\left(y_i\right)\\
&=(1-2e_i)\cdot \ell_i.
\end{align}
\item If $\tau\left(y_i\right)< 0$ and $\tilde{c}_i=1$,
\begin{align}
(1-2c_i)\cdot \tau\left(y_i\right)&=(1-2(\tilde{c}_i\oplus e_i))\cdot \tau\left(y_i\right)\\
&=(1-2(1-e_i))\cdot -\ell_i\\
&=(1-2e_i)\cdot \ell_i.
\end{align}
\end{itemize}
We have an \ac{ML} decision by identify the noise effect $z^n$ via
\begin{align}
    \hat{z}^n &= \argmax_{e^n: \tilde{c}^n\oplus e^n\in\mathcal{C}} \sum_{i=1}^n (1-2(\tilde{c}_i\oplus e_i)) \cdot \tau\left(y_i\right)\\
    &= \argmax_{e^n: \tilde{c}^n\oplus e^n\in\mathcal{C}} \sum_{i=1}^n (1-2e_i)\cdot \ell_i\\
    &= \argmin_{e^n: \tilde{c}^n\oplus e^n\in\mathcal{C}} \sum_{i=1}^n e_i\cdot \ell_i.
\end{align}
By the definition of noise effect eq.~\eqref{eq:noiseeffect}, we have the \ac{ML} decision via $\hat{c}^n=\tilde{c}^n\oplus \hat{z}^n$. Note that this proof is equivalent to \cite[Theorem III.1]{solomon20SGRAND}.
\end{proof}

Theorem~\ref{th:sgrand} shows that \ac{SGRAND} provides \ac{ML} decisions. However, \ac{SGRAND} requires a dynamic data structure to generate the error patterns rank-ordered by the score depending upon the real-valued reliabilities $\ell^n$. 

\Ac{SRGRAND} uses a one bit quantizer for the reliabilities, i.e., \ac{SRGRAND} sets $w_i=+\infty$ for bits with reliability $\ell_i$ higher than a threshold, tagging them as being perfectly reliable, and $w_i=1$ for those below the threshold, resulting in error patterns following increasing Hamming weight within the region of unreliable bits~\cite{Duffy22}, i.e.,
\begin{equation}
  w_i=\left\{
    \begin{aligned}
      1, &\text{ if } \ell_i < \delta\\
      +\infty, &\text{ otherwise}.
    \end{aligned}\right.
\end{equation}

\subsection{ORBGRAND}\label{sec:ORB}
Figure~\ref{fig:orb_linear} shows that the rank-ordered reliabilities are increasing almost linearly at low to moderate \ac{SNR} regime. Based on this observation, the basic version of \ac{ORBGRAND}~\cite{duffy2021ordered} considers the received bits rank-ordered in increasing reliability and their weights are increasing linearly, i.e., 
\begin{align}\label{eq:orb_weight}
    w_i=r,\quad \text{$\ell_i$ is the $r$-th smallest element in $\ell^n$}.
\end{align}
\Ac{ORBGRAND} sorts the reliabilities $\ell^n$ and set the weights $w^n$ to its rank orders $r\in\left\{1,2,\dots,n\right\}$, i.e., \ac{ORBGRAND} uses a $\lceil\log_2(n)\rceil$ bits input-related statistical model-based quantizer. Then error pattern generation could be solved by determining distinct integer partitions. \Ac{ORBGRAND}'s advantage is that, once ranking is complete, pattern generation can be done on the fly with essentially no memory. \Ac{ORBGRAND} provides near-\ac{ML} decoding for \acp{BLER} greater than $10^{-4}$, but it is less precise at high \ac{SNR}. To overcome this problem, a multi-line \ac{ORBGRAND} with a more sophisticated statistical model is proposed in~\cite{duffy22ORBGRAND}.

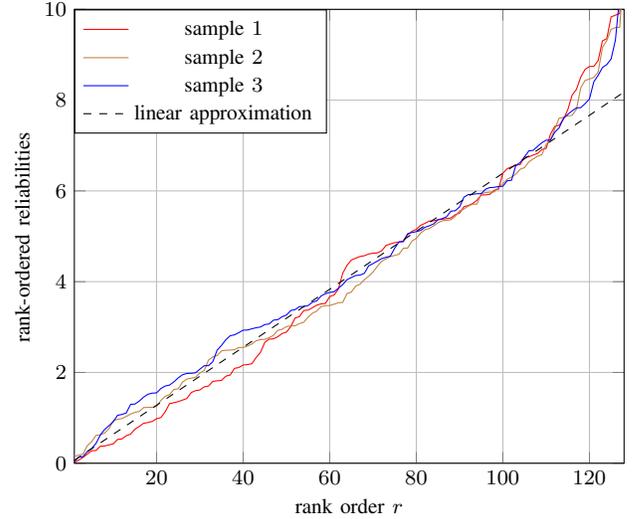
\begin{figure}[t]
	\centering
	\footnotesize
	\begin{tikzpicture}[scale=1]
\begin{axis}[
legend style={at={(0,1)},anchor= north west},
ymin=0,
ymax=10,
width=3.5in,
height=3.0in,
grid=both,
xmin = 1,
xmax = 128,
xlabel = rank order $r$,
ylabel = {rank-ordered reliabilities},
]

\addplot[red]
table[x=snr,y=fer]{snr fer
1 0.014975
2 0.067904
3 0.12906
4 0.20152
5 0.27591
6 0.28069
7 0.3743
8 0.3791
9 0.40448
10 0.42729
11 0.52522
12 0.53462
13 0.60667
14 0.6395
15 0.75146
16 0.81145
17 0.87951
18 0.89729
19 0.92114
20 0.98061
21 1.0064
22 1.1344
23 1.3131
24 1.334
25 1.3579
26 1.3912
27 1.4266
28 1.5567
29 1.5988
30 1.6137
31 1.6832
32 1.7005
33 1.8045
34 1.8162
35 1.8244
36 1.9183
37 1.9427
38 2.0901
39 2.1085
40 2.1636
41 2.1735
42 2.1917
43 2.3291
44 2.441
45 2.6643
46 2.734
47 2.7399
48 2.7863
49 2.8589
50 2.8884
51 3.0113
52 3.1995
53 3.2304
54 3.3654
55 3.3735
56 3.4259
57 3.4833
58 3.5105
59 3.5198
60 3.6718
61 3.7058
62 3.8349
63 4.1846
64 4.3609
65 4.4774
66 4.5181
67 4.5552
68 4.572
69 4.6102
70 4.6276
71 4.6335
72 4.6809
73 4.7975
74 4.8317
75 4.8606
76 4.8731
77 4.8979
78 5.0325
79 5.1102
80 5.1543
81 5.2414
82 5.2859
83 5.3369
84 5.3415
85 5.3653
86 5.3749
87 5.3904
88 5.403
89 5.4761
90 5.5121
91 5.6518
92 5.6781
93 5.7202
94 5.7922
95 5.9092
96 5.9122
97 5.9249
98 6.0144
99 6.0387
100 6.3888
101 6.4895
102 6.5316
103 6.5625
104 6.5812
105 6.7452
106 6.7505
107 6.7907
108 6.8239
109 6.9104
110 6.9378
111 7.2513
112 7.4238
113 7.4466
114 7.5785
115 7.7899
116 8.1112
117 8.2388
118 8.5107
119 8.6787
120 8.7421
121 8.7439
122 8.8699
123 9.3059
124 9.3532
125 9.8335
126 9.8719
127 9.9098
128 12.604

};\addlegendentry{sample $1$}

\addplot[brown]
table[x=snr,y=fer]{snr fer
1 0.15161
2 0.17977
3 0.20004
4 0.38773
5 0.47311
6 0.61269
7 0.62864
8 0.66598
9 0.76012
10 0.93625
11 0.96677
12 0.98504
13 1.0316
14 1.0839
15 1.1215
16 1.1403
17 1.2259
18 1.2305
19 1.2322
20 1.2494
21 1.3826
22 1.4744
23 1.5131
24 1.6254
25 1.6403
26 1.7874
27 1.8536
28 1.8742
29 1.8955
30 1.9872
31 2.0425
32 2.2724
33 2.3447
34 2.3725
35 2.4753
36 2.4899
37 2.4924
38 2.5023
39 2.5497
40 2.5518
41 2.5803
42 2.6261
43 2.6976
44 2.7228
45 2.7297
46 2.7662
47 2.8183
48 2.9288
49 2.9515
50 3.0112
51 3.0268
52 3.0288
53 3.077
54 3.106
55 3.2181
56 3.3044
57 3.353
58 3.4499
59 3.4717
60 3.4751
61 3.5035
62 3.5363
63 3.5439
64 3.7371
65 3.7727
66 3.9024
67 3.945
68 4.0433
69 4.1213
70 4.2113
71 4.3388
72 4.4499
73 4.5033
74 4.554
75 4.5784
76 4.6045
77 4.737
78 4.7401
79 4.8957
80 4.9531
81 5.0699
82 5.1445
83 5.1721
84 5.2285
85 5.2985
86 5.3409
87 5.3508
88 5.3644
89 5.4411
90 5.5629
91 5.6001
92 5.665
93 5.7076
94 5.7098
95 5.9382
96 5.9628
97 5.9733
98 5.9774
99 6.0759
100 6.1594
101 6.2567
102 6.3041
103 6.3712
104 6.4778
105 6.513
106 6.6396
107 6.705
108 6.7632
109 6.7961
110 7.0242
111 7.1282
112 7.3677
113 7.6016
114 7.6225
115 7.6537
116 7.6648
117 7.7984
118 8.2733
119 8.4342
120 8.4641
121 8.5008
122 8.6149
123 9.1617
124 9.3129
125 9.5588
126 9.5982
127 9.6051
128 10.994

};\addlegendentry{sample $2$}

\addplot[blue]
table[x=snr,y=fer]{snr fer
1 0.033887
2 0.13719
3 0.14356
4 0.25756
5 0.32588
6 0.43487
7 0.62136
8 0.74348
9 0.82664
10 0.91591
11 1.0565
12 1.0746
13 1.1085
14 1.2972
15 1.304
16 1.3699
17 1.4564
18 1.5197
19 1.5444
20 1.55
21 1.6462
22 1.7019
23 1.722
24 1.7929
25 1.8812
26 1.9499
27 1.9781
28 1.9818
29 1.9934
30 2.0716
31 2.1484
32 2.1508
33 2.239
34 2.4934
35 2.5965
36 2.7208
37 2.8085
38 2.8294
39 2.8773
40 2.9323
41 2.9344
42 2.9622
43 2.9851
44 3.0013
45 3.0555
46 3.0687
47 3.1527
48 3.1628
49 3.228
50 3.2648
51 3.3644
52 3.3848
53 3.4504
54 3.4541
55 3.4741
56 3.5483
57 3.5761
58 3.686
59 3.7338
60 3.7644
61 3.7782
62 3.8563
63 3.9349
64 4.0462
65 4.0921
66 4.1372
67 4.1456
68 4.1838
69 4.3507
70 4.3989
71 4.4533
72 4.5084
73 4.5377
74 4.5629
75 4.7364
76 4.8762
77 4.8804
78 5.0574
79 5.0836
80 5.1033
81 5.1286
82 5.1986
83 5.2315
84 5.2729
85 5.3656
86 5.3889
87 5.4311
88 5.5542
89 5.5633
90 5.6464
91 5.8616
92 5.9192
93 5.9246
94 5.9375
95 5.9399
96 6.0383
97 6.0736
98 6.0784
99 6.0965
100 6.1026
101 6.2226
102 6.2348
103 6.5623
104 6.6242
105 6.7509
106 6.8822
107 6.8938
108 6.9806
109 7.0618
110 7.1178
111 7.1195
112 7.2955
113 7.3892
114 7.6019
115 7.6876
116 7.7638
117 7.829
118 7.8347
119 7.9228
120 8.0256
121 8.402
122 8.544
123 8.7183
124 8.7823
125 8.9063
126 9.3238
127 10.315
128 10.478

};\addlegendentry{sample $3$}

\addplot[black,dashed]
table[x=snr,y=fer]{snr fer
1 0.063888
2 0.12778
3 0.19166
4 0.25555
5 0.31944
6 0.38333
7 0.44722
8 0.5111
9 0.57499
10 0.63888
11 0.70277
12 0.76666
13 0.83054
14 0.89443
15 0.95832
16 1.0222
17 1.0861
18 1.15
19 1.2139
20 1.2778
21 1.3416
22 1.4055
23 1.4694
24 1.5333
25 1.5972
26 1.6611
27 1.725
28 1.7889
29 1.8528
30 1.9166
31 1.9805
32 2.0444
33 2.1083
34 2.1722
35 2.2361
36 2.3
37 2.3639
38 2.4277
39 2.4916
40 2.5555
41 2.6194
42 2.6833
43 2.7472
44 2.8111
45 2.875
46 2.9388
47 3.0027
48 3.0666
49 3.1305
50 3.1944
51 3.2583
52 3.3222
53 3.3861
54 3.45
55 3.5138
56 3.5777
57 3.6416
58 3.7055
59 3.7694
60 3.8333
61 3.8972
62 3.9611
63 4.0249
64 4.0888
65 4.1527
66 4.2166
67 4.2805
68 4.3444
69 4.4083
70 4.4722
71 4.536
72 4.5999
73 4.6638
74 4.7277
75 4.7916
76 4.8555
77 4.9194
78 4.9833
79 5.0472
80 5.111
81 5.1749
82 5.2388
83 5.3027
84 5.3666
85 5.4305
86 5.4944
87 5.5583
88 5.6221
89 5.686
90 5.7499
91 5.8138
92 5.8777
93 5.9416
94 6.0055
95 6.0694
96 6.1332
97 6.1971
98 6.261
99 6.3249
100 6.3888
101 6.4527
102 6.5166
103 6.5805
104 6.6444
105 6.7082
106 6.7721
107 6.836
108 6.8999
109 6.9638
110 7.0277
111 7.0916
112 7.1555
113 7.2193
114 7.2832
115 7.3471
116 7.411
117 7.4749
118 7.5388
119 7.6027
120 7.6666
121 7.7304
122 7.7943
123 7.8582
124 7.9221
125 7.986
126 8.0499
127 8.1138
128 8.1777

};\addlegendentry{linear approximation}

\end{axis}
\end{tikzpicture}
	\caption{Random samples of rank-ordered reliability at $E_s/N_0=3~\text{dB}$ for $n=128$.}
	\label{fig:orb_linear}
\end{figure}

\section{Achievable rate of ORBGRAND}\label{sec:orbgrand}
In this section, the achievable rate of \ac{ORBGRAND} is discussed. In contrast to the approach taken in~\cite{liu2022orbgrand}, our analysis is based on order statistics and mismatched
decoding theory.

\subsection{Order statistics of the reliabilities}\label{sec:OrderStatistic}

Let $\ell_{\left(r\right)}$ be the $r$-th smallest element in vector $\ell^n$ and $L_{\left(r\right)}$ be the $r$-th order statistic of the reliability samples $L^n$. Since $L_i=|\tau(Y_i)|$, the reliabilities $L_i$ are independent and identically folded Gaussian distributed, i.e.,

\begin{equation}
L_i \overset{\text{i.i.d.}}{\sim} f_{L}(a) = p_\text{FG}\left(a\left|\frac{2}{\sigma^2},\frac{4}{\sigma^2}\right.\right)
\end{equation}
where $p_\text{FG}\left(x\left|\mu,\sigma^2\right.\right)$ denotes the PDF of a folded Gaussian distributed RV, i.e.,
\begin{equation}
  p_\text{FG}\left(a\left|\mu,\sigma^2\right.\right)\hspace{-3pt}=\hspace{-3pt}\left\{
    \begin{aligned}
      p_\text{G}\left(a\left|\mu,\sigma^2\right.\right) +p_\text{G}\left(-a\left|\mu,\sigma^2\right.\right),& \text{~if~} a \geq 0\\
      0, & \text{~otherwise}
    \end{aligned}\right.
\end{equation}
where $p_\text{G}\left(x\left|\mu,\sigma^2\right.\right)$ denotes the \ac{PDF} of a Gaussian \ac{RV}.
We have then the distribution of $L_{\left(r\right)}$
\begin{align}\label{eq:orderstat}
    f_{L_{(r)}}(a) = \frac{n!}{(r-1)!(n-r)!} f_{L}(a) F_{L}(a)^{r-1}\left(1-F_{L}(a)\right)^{n-r}.
\end{align}

As mentioned in Section~\ref{sec:GRANDs}, basic \ac{ORBGRAND} does not provide \ac{ML} decisions in general, since \ac{ORBGRAND} uses an input-related model-based quantizer, which maps the $r$-th smallest reliability $\ell_{(r)}$ to its rank order $r$, i.e.,
\begin{align}
    \ell_{(r)}\mapsto r.
\end{align}
We separate this mapping in two parts,
\begin{align}
    \ell_{(r)}\mapsto\text{E}\left[L_{(r)}\right] \text{~and~} \text{E}\left[L_{(r)}\right]\mapsto r.
\end{align}

We first map the reliability $\ell_{(r)}$ to its expectation $\text{E}\left[L_{(r)}\right]$. Since the reliabilities $\ell^n$ are \ac{i.i.d.} distributed with finite variance, the variance of the order statistics approaches zero asymptotically in the number of samples~\cite{boucheron2012concentration}, i.e.,
\begin{align}
    \lim_{n \to +\infty} \text{Var}\left[L_{(r)}\right]=0.
\end{align}
An example for finite code length is shown in Figure~\ref{fig:AvgEL}. As a result, $\ell_{(r)}\mapsto\text{E}\left[L_{(r)}\right]$ introduces only very limited performance loss. In~\cite[Sec.~3.D]{duffy22ORBGRAND}, a multi-line \ac{ORBGRAND} is proposed. The performance loss of multi-line \ac{ORBGRAND} vanishes if a precise enough curve fitting of $\text{E}\left[L_{(r)}\right]$ is used as weight $w^n$. 

The expectation $\text{E}\left[L_{(r)}\right]$ is then mapped to the rank order $r$, i.e., \ac{ORBGRAND} considers the received bits rank-ordered in increasing reliability and their weights are increasing linearly, which is only precise at moderate \ac{SNR}. An example is shown in Figure~\ref{fig:OrderStat}, the expectations are calculated using eq.~\eqref{eq:orderstat}. To evaluate the loss introduced by $\text{E}\left[L_{(r)}\right]\mapsto r$, we define a continuous curve fitting $\lambda(r)$ for the discrete function $\text{E}\left[L_{(r)}\right]$. By construction, we have that
    \begin{align}
        \lambda^{-1}\left(\text{E}\left[L_{(r)}\right]\right) = r.
    \end{align}
Thus, $\lambda^{-1}\left(\ell_{(r)}\right)$ is an approximation of $r$, i.e.,
    \begin{align}\label{eq:approx_orb}
        \lambda^{-1}\left(\ell_{(r)}\right) \approx r.
    \end{align}
An example for a realization $y^n$ at $8$~dB is shown in Figure~\ref{fig:ORBMetric}. We observe that the approximation in eq.~\eqref{eq:approx_orb} is pretty precise, i.e., by using the function $\lambda^{-1}(\cdot)$, we remove the statistical information of $\text{E}\left[L_{(r)}\right]$ and treat the reliabilities increasing almost linearly as basic \ac{ORBGRAND}. 

\subsection{Achievable rates and mismatched decoding}\label{sec:achievablerate}
Consider a \ac{BMS} channel $f_{Y|C}$ with codewords $C^n$ uniformly distributed in a codebook $\mathcal{C}$. The highest rate that can be supported, in the Shannon sense, is governed by mutual information, $\text{I}\left(C;Y\right)$. An \ac{ML} decoder eq.~\eqref{eq:MLD} uses the \ac{LLR} $\tau(Y)$ as decoding input and achieves the mutual information.

In contrast to \ac{ML} decoding, a mismatched decoder~\cite[Sec. 8.2]{bocherer2018principles} uses a different function $M(Y)$ from $\tau(Y)$ as the decoding input providing the soft information about bit $C$, i.e.,
\begin{align}
    \hat{c}^n =\argmax_{c^n\in\mathcal{C}} \sum_{i=1}^n (1-2c_i) \cdot M\left(y_i\right).\label{eq:misD}
\end{align}
Defining
\begin{align}
    &R(Y,C,M(Y),s,r)= \\
    &\text{E} \left[ \log_2 \frac{e^{s{M(Y)}(1-2C)}r(C)}{\sum_{c\in\{0,1\}}P_C(c)e^{s{M(Y)}(1-2c)}r(c) } \right].
\end{align}
where $s\geq 0$, and where $r:\{0,1\}\rightarrow \mathbb{R}$ is a real-valued function with finite expectation $\text{E}\left[r(C)\right]<\infty$. By \cite{ganti2000mismatched}, it is known that the block error probability of mismatched decoding approaches zero for $n$ approaching infinity if
\begin{align}
   \frac{k}{n}<R\left(Y,C,M(Y),s,r\right).
\end{align}

In general, we have $R\left(Y,C,M(Y),s,r\right)\leq \text{I}\left(C;Y\right)$ with equality if $M(Y)=\tau(Y)$, $s=1$ and $r(\cdot)=1$. Fixing $r(\cdot) = 1$~\cite{bocherer2017efficient} and maximizing over $s$ yields the generalized mutual information~\cite{kaplan1993information}. Maximizing over $r,s$ yields the highest achievable rate, the LM-rate~\cite{ganti2000mismatched,bocherer2018principles}. For a \ac{BMS} channel, we have
\begin{align}
    &R_\text{LM}(Y,C,M(Y))=\max_{r,s} R\left(Y,C,M,s,r\right)\\
    &=\max_{r,s} \text{E} \left[ \log_2 \frac{2e^{s{M(Y)}(1-2C)}r(C)}{e^{s{M(Y)}}r(0)+e^{-s{M(Y)}}r(1) } \right].
\end{align}
Note that we have $R_\text{LM}(Y,C,\tau(Y))=\text{I}\left(C;Y\right)$.

We now define a function $\nu (\cdot)$,
    \begin{align}
        \nu(y)\triangleq \text{sign}(y)\cdot\lambda^{-1}\left(\frac{2|y|}{\sigma^2}\right)
    \end{align}
The LM-rate of a basic \ac{ORBGRAND} is given by
    \begin{align}\label{eq:LMrateORB}
        R_\text{LM}(Y,C,\nu(Y)).
    \end{align}
An example for $n=128$ is shown in Figure~\ref{fig:ORBLoss}, where the loss is defined by
    \begin{align}
        \left[E_s/N_0\right]_{\text{dB}} - \left[C_\text{BPSK}^{-1}( R_\text{LM}(Y,C,\nu(Y)) )\right]_{\text{dB}}.
    \end{align}
Note that eq.~\eqref{eq:LMrateORB} is \emph{not} an achievable rate for finite code length $n$. The code length is only related to the statistical model $\text{E}\left[L_{(r)}\right]$, and consequently related to $\nu(\cdot)$.

We have following conclusions about the performance of \ac{ORBGRAND}: consistent with the results in ~\cite{liu2022orbgrand}, since the rank-ordered weights are increasing almost linearly at low to moderate \ac{SNR} regimes, the basic version of \ac{ORBGRAND} is an effective approach; n the high \ac{SNR} regime, multi-line \ac{ORBGRAND}~\cite[Sec.~3.D]{duffy22ORBGRAND} exhibits a negligible loss relatively to capacity-achieving \ac{SGRAND} if a precise enough curve fitting is used for multi-line \ac{ORBGRAND}.


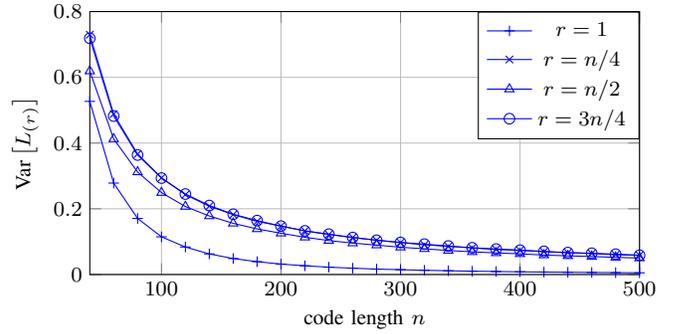
\begin{figure}[t]
	\centering
	\footnotesize
	\begin{tikzpicture}[scale=1]
\begin{axis}[
legend style={at={(1,1)},anchor= north east},
ymin=0,
ymax=0.8,
width=3.5in,
height=2.0in,
grid=both,
xmin = 40,
xmax = 500,
xlabel = code length $n$,
ylabel = {$\text{Var}\left[L_{(r)}\right]$},
]

\addplot[blue,mark=+]
table[x=snr,y=fer]{snr fer
40 0.52675
60 0.27819
80 0.17056
100 0.11472
120 0.083657
140 0.062688
160 0.048595
180 0.039074
200 0.031723
220 0.026581
240 0.022369
260 0.019267
280 0.016462
300 0.014657
320 0.012873
340 0.011429
360 0.01014
380 0.0090746
400 0.0082201
420 0.0074538
440 0.0068854
460 0.0062634
480 0.0057359
500 0.0052467

};\addlegendentry{$r=1$}

\addplot[blue,mark=x]
table[x=snr,y=fer]{snr fer
40 0.7296
60 0.48823
80 0.36655
100 0.29416
120 0.24289
140 0.20782
160 0.18226
180 0.16177
200 0.14721
220 0.13285
240 0.12163
260 0.11282
280 0.10444
300 0.097283
320 0.090417
340 0.085565
360 0.080992
380 0.076351
400 0.073558
420 0.069324
440 0.066435
460 0.063896
480 0.060129
500 0.058489

};\addlegendentry{$r=n/4$}

\addplot[blue,mark=triangle]
table[x=snr,y=fer]{snr fer
40 0.61893
60 0.41247
80 0.31183
100 0.24909
120 0.20682
140 0.17801
160 0.15574
180 0.13865
200 0.1256
220 0.11338
240 0.10333
260 0.095598
280 0.08954
300 0.083048
320 0.078179
340 0.073098
360 0.068947
380 0.065653
400 0.062943
420 0.059192
440 0.056754
460 0.05478
480 0.052068
500 0.050174

};\addlegendentry{$r=n/2$}

\addplot[blue,mark=o]
table[x=snr,y=fer]{snr fer
40 0.71867
60 0.48207
80 0.36357
100 0.29338
120 0.24484
140 0.20989
160 0.18331
180 0.16406
200 0.14726
220 0.13279
240 0.12275
260 0.11255
280 0.10422
300 0.097576
320 0.09242
340 0.086722
360 0.081311
380 0.077519
400 0.074096
420 0.070428
440 0.066686
460 0.06469
480 0.06184
500 0.058688

};\addlegendentry{$r=3n/4$}

\end{axis}
\end{tikzpicture}
	\caption{Variance of the $r$-th order statistic vs. code length (number of samples) at $E_s/N_0=6$~dB.}
	\label{fig:AvgEL}
\end{figure}

\begin{figure}[t]
	\centering
	\footnotesize
	\begin{tikzpicture}[scale=1]
\begin{axis}[
legend style={at={(0,1)},anchor= north west},
ymin=0,
ymax=30,
width=3.5in,
height=3.0in,
grid=both,
xmin = 1,
xmax = 128,
xlabel = rank order $r$,
ylabel = {$\text{E}\left[L_{(r)}\right]$},
]

\addplot[red]
table[x=snr,y=fer]{snr fer
1 0.073939
2 0.14863
3 0.22247
4 0.29627
5 0.3699
6 0.44334
7 0.51652
8 0.58931
9 0.66158
10 0.73383
11 0.80519
12 0.87622
13 0.94714
14 1.0171
15 1.0868
16 1.1559
17 1.2243
18 1.2923
19 1.3594
20 1.4267
21 1.4932
22 1.5594
23 1.6251
24 1.6898
25 1.7543
26 1.8181
27 1.8811
28 1.9438
29 2.0061
30 2.068
31 2.1294
32 2.1901
33 2.2502
34 2.3099
35 2.3695
36 2.4288
37 2.4875
38 2.5457
39 2.6039
40 2.6615
41 2.7191
42 2.776
43 2.8324
44 2.8888
45 2.9449
46 3.0007
47 3.0561
48 3.1116
49 3.1667
50 3.222
51 3.2769
52 3.3314
53 3.3861
54 3.4406
55 3.4949
56 3.549
57 3.6027
58 3.6568
59 3.7107
60 3.7645
61 3.8188
62 3.8731
63 3.9274
64 3.9817
65 4.0358
66 4.0903
67 4.1445
68 4.1991
69 4.2537
70 4.3082
71 4.3628
72 4.418
73 4.4732
74 4.5284
75 4.5842
76 4.6405
77 4.6966
78 4.7532
79 4.8104
80 4.868
81 4.9261
82 4.9841
83 5.0426
84 5.1015
85 5.161
86 5.2209
87 5.2815
88 5.3424
89 5.4044
90 5.4672
91 5.5307
92 5.5947
93 5.6597
94 5.7258
95 5.7927
96 5.8604
97 5.9297
98 5.9999
99 6.0717
100 6.1447
101 6.2191
102 6.295
103 6.3725
104 6.4519
105 6.5335
106 6.6174
107 6.7031
108 6.7922
109 6.8835
110 6.9788
111 7.0773
112 7.1796
113 7.2866
114 7.3984
115 7.5154
116 7.6389
117 7.7697
118 7.9107
119 8.0612
120 8.2242
121 8.4024
122 8.6005
123 8.8245
124 9.0815
125 9.3918
126 9.7838
127 10.335
128 11.319

};\addlegendentry{$E_s/N_0=3$ dB}

\addplot[brown]
table[x=snr,y=fer]{snr fer
1 0.21382
2 0.42506
3 0.63316
4 0.8354
5 1.0323
6 1.2227
7 1.406
8 1.5832
9 1.7538
10 1.9181
11 2.0774
12 2.2303
13 2.378
14 2.5215
15 2.6599
16 2.7942
17 2.9239
18 3.0498
19 3.1721
20 3.2912
21 3.4074
22 3.5212
23 3.631
24 3.7391
25 3.8452
26 3.9487
27 4.0502
28 4.1493
29 4.2474
30 4.3426
31 4.4369
32 4.5295
33 4.6203
34 4.7102
35 4.7992
36 4.8869
37 4.9733
38 5.0589
39 5.1434
40 5.2267
41 5.3088
42 5.3905
43 5.4712
44 5.5512
45 5.6302
46 5.7089
47 5.7874
48 5.8657
49 5.943
50 6.0193
51 6.0957
52 6.1716
53 6.2471
54 6.3225
55 6.3976
56 6.4723
57 6.5465
58 6.6203
59 6.6942
60 6.7679
61 6.8416
62 6.9156
63 6.9891
64 7.0629
65 7.1367
66 7.2105
67 7.2839
68 7.3575
69 7.431
70 7.505
71 7.579
72 7.6531
73 7.7278
74 7.8024
75 7.8773
76 7.953
77 8.0284
78 8.1048
79 8.1818
80 8.2589
81 8.3366
82 8.4145
83 8.4933
84 8.5722
85 8.6522
86 8.7327
87 8.8141
88 8.896
89 8.9793
90 9.0632
91 9.148
92 9.2341
93 9.3209
94 9.409
95 9.4985
96 9.5887
97 9.681
98 9.775
99 9.871
100 9.9685
101 10.068
102 10.169
103 10.273
104 10.38
105 10.489
106 10.601
107 10.716
108 10.835
109 10.957
110 11.085
111 11.217
112 11.353
113 11.496
114 11.646
115 11.803
116 11.968
117 12.142
118 12.329
119 12.528
120 12.746
121 12.983
122 13.248
123 13.546
124 13.891
125 14.303
126 14.828
127 15.558
128 16.878

};\addlegendentry{$E_s/N_0=6$ dB}

\addplot[blue]
table[x=snr,y=fer]{snr fer
1 1.0046
2 1.8429
3 2.5402
4 3.1252
5 3.6236
6 4.0584
7 4.4418
8 4.7841
9 5.0968
10 5.3837
11 5.6492
12 5.8959
13 6.1284
14 6.3466
15 6.556
16 6.7545
17 6.9448
18 7.1263
19 7.3017
20 7.4699
21 7.6332
22 7.791
23 7.9444
24 8.0939
25 8.2396
26 8.3812
27 8.5198
28 8.6559
29 8.789
30 8.9186
31 9.0455
32 9.1702
33 9.2935
34 9.4145
35 9.5342
36 9.6524
37 9.7682
38 9.8827
39 9.9961
40 10.108
41 10.218
42 10.328
43 10.436
44 10.544
45 10.651
46 10.756
47 10.861
48 10.965
49 11.069
50 11.172
51 11.274
52 11.375
53 11.476
54 11.577
55 11.676
56 11.777
57 11.876
58 11.975
59 12.074
60 12.173
61 12.272
62 12.37
63 12.468
64 12.566
65 12.664
66 12.762
67 12.861
68 12.959
69 13.057
70 13.155
71 13.255
72 13.354
73 13.453
74 13.553
75 13.653
76 13.753
77 13.855
78 13.955
79 14.058
80 14.16
81 14.263
82 14.367
83 14.472
84 14.577
85 14.684
86 14.791
87 14.9
88 15.009
89 15.12
90 15.232
91 15.345
92 15.46
93 15.576
94 15.694
95 15.813
96 15.934
97 16.057
98 16.183
99 16.311
100 16.441
101 16.574
102 16.709
103 16.848
104 16.99
105 17.135
106 17.284
107 17.438
108 17.595
109 17.759
110 17.928
111 18.103
112 18.285
113 18.475
114 18.674
115 18.882
116 19.103
117 19.335
118 19.583
119 19.85
120 20.139
121 20.454
122 20.807
123 21.204
124 21.663
125 22.216
126 22.911
127 23.89
128 25.647

};\addlegendentry{$E_s/N_0=8$ dB}

\end{axis}
\end{tikzpicture}
	\caption{$\text{E}\left[L_{(r)}\right]$ for $n=128$ at $Es/N_0=\left\{3,6,8\right\}$~dB.}
	\label{fig:OrderStat}
\end{figure}

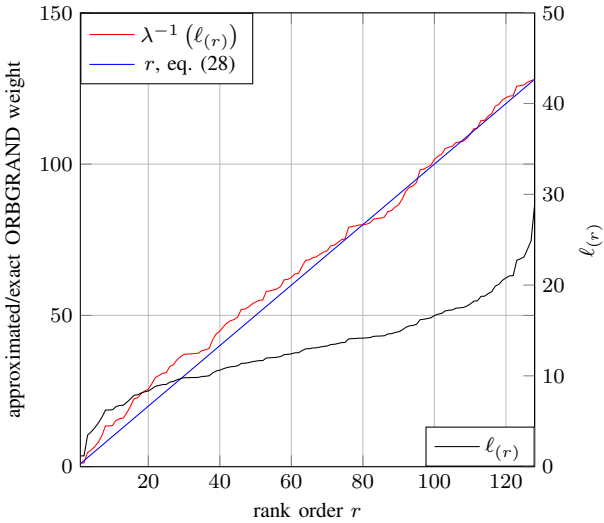
\begin{figure}[t]
	\centering
	\footnotesize
	\begin{tikzpicture}[scale=1]
\begin{axis}[
legend style={at={(0,1)},anchor= north west},
ymin=0,
ymax=150,
width=3in,
height=3.0in,
grid=both,
xmin = 1,
xmax = 128,
xlabel = rank order $r$,
ylabel = {approximated/exact ORBGRAND weight},
]

\addplot[red]
table[x=snr,y=fer]{snr fer
1 1.197
2 1.2104
3 4.6734
4 5.4732
5 6.5735
6 8.07
7 10.186
8 13.436
9 13.459
10 13.604
11 15.312
12 15.876
13 16.048
14 17.82
15 19.905
16 22.446
17 22.747
18 24.148
19 25.014
20 25.545
21 27.44
22 29.463
23 30.144
24 30.885
25 31.003
26 33.109
27 33.697
28 35.283
29 36.313
30 37.149
31 37.196
32 37.31
33 37.402
34 37.499
35 38.374
36 38.585
37 39.099
38 41.937
39 43.928
40 44.784
41 46.272
42 47.481
43 48.26
44 48.628
45 49.447
46 51.975
47 52.074
48 52.762
49 53.736
50 54.414
51 54.984
52 55.091
53 57.897
54 58.104
55 58.42
56 58.782
57 59.552
58 61.759
59 61.912
60 62.564
61 63.717
62 63.851
63 66.257
64 68.213
65 68.332
66 69.118
67 69.336
68 70.194
69 70.913
70 71.352
71 72.974
72 73.239
73 74.074
74 75.033
75 75.279
76 79.122
77 79.354
78 79.574
79 79.901
80 79.908
81 80.164
82 80.578
83 81.829
84 82.045
85 82.152
86 82.498
87 84.33
88 84.631
89 85.817
90 86.574
91 88.487
92 90.996
93 92.375
94 92.782
95 93.939
96 98.088
97 98.337
98 98.992
99 99.633
100 101.54
101 102.61
102 103.23
103 105.11
104 105.59
105 105.87
106 107.11
107 107.31
108 107.55
109 108.46
110 109.92
111 111.68
112 111.89
113 114.36
114 114.47
115 115.8
116 116.7
117 119.13
118 119.77
119 121.19
120 121.95
121 122.5
122 122.56
123 125.73
124 125.97
125 126.17
126 127.04
127 127.56
128 128

};\addlegendentry{$\lambda^{-1}\left(\ell_{(r)}\right)$}

\addplot[blue]
table[x=snr,y=fer]{snr fer
1 1
2 2
3 3
4 4
5 5
6 6
7 7
8 8
9 9
10 10
11 11
12 12
13 13
14 14
15 15
16 16
17 17
18 18
19 19
20 20
21 21
22 22
23 23
24 24
25 25
26 26
27 27
28 28
29 29
30 30
31 31
32 32
33 33
34 34
35 35
36 36
37 37
38 38
39 39
40 40
41 41
42 42
43 43
44 44
45 45
46 46
47 47
48 48
49 49
50 50
51 51
52 52
53 53
54 54
55 55
56 56
57 57
58 58
59 59
60 60
61 61
62 62
63 63
64 64
65 65
66 66
67 67
68 68
69 69
70 70
71 71
72 72
73 73
74 74
75 75
76 76
77 77
78 78
79 79
80 80
81 81
82 82
83 83
84 84
85 85
86 86
87 87
88 88
89 89
90 90
91 91
92 92
93 93
94 94
95 95
96 96
97 97
98 98
99 99
100 100
101 101
102 102
103 103
104 104
105 105
106 106
107 107
108 108
109 109
110 110
111 111
112 112
113 113
114 114
115 115
116 116
117 117
118 118
119 119
120 120
121 121
122 122
123 123
124 124
125 125
126 126
127 127
128 128

};\addlegendentry{$r$, eq.~\eqref{eq:orb_weight}}

\end{axis}

\begin{axis}[
legend style={at={(1,0)},anchor= south east},
ymin=0,
ymax=50,
hide x axis,
axis y line*=right,
ylabel near ticks,
width=3in,
height=3.0in,
xmin = 1,
xmax = 128,
xlabel = rank order $r$,
ylabel = {$\ell_{(r)}$},
]

\addplot[black]
table[x=snr,y=fer]{snr fer
1 1.1757
2 1.187
3 3.4704
4 3.8376
5 4.2844
6 4.8134
7 5.4413
8 6.2322
9 6.2372
10 6.269
11 6.6259
12 6.7375
13 6.7711
14 7.1015
15 7.4617
16 7.8659
17 7.9121
18 8.1218
19 8.2478
20 8.3232
21 8.5863
22 8.8549
23 8.9428
24 9.0374
25 9.0524
26 9.313
27 9.3844
28 9.573
29 9.6938
30 9.7909
31 9.7963
32 9.8094
33 9.8199
34 9.8311
35 9.931
36 9.9548
37 10.013
38 10.326
39 10.541
40 10.632
41 10.789
42 10.915
43 10.996
44 11.034
45 11.118
46 11.376
47 11.386
48 11.456
49 11.554
50 11.622
51 11.678
52 11.689
53 11.968
54 11.989
55 12.02
56 12.056
57 12.132
58 12.349
59 12.364
60 12.428
61 12.541
62 12.554
63 12.79
64 12.981
65 12.993
66 13.07
67 13.091
68 13.176
69 13.247
70 13.29
71 13.452
72 13.478
73 13.562
74 13.658
75 13.683
76 14.073
77 14.097
78 14.12
79 14.154
80 14.154
81 14.181
82 14.224
83 14.354
84 14.377
85 14.388
86 14.425
87 14.617
88 14.649
89 14.777
90 14.859
91 15.069
92 15.35
93 15.509
94 15.556
95 15.693
96 16.199
97 16.231
98 16.314
99 16.397
100 16.65
101 16.797
102 16.884
103 17.155
104 17.226
105 17.268
106 17.458
107 17.489
108 17.528
109 17.675
110 17.919
111 18.232
112 18.27
113 18.755
114 18.778
115 19.063
116 19.271
117 19.895
118 20.08
119 20.533
120 20.799
121 21.015
122 21.038
123 22.736
124 22.898
125 23.084
126 23.962
127 24.88
128 28.949

};\addlegendentry{$\ell_{(r)}$}

\end{axis}

\end{tikzpicture}
	\caption{$\lambda^{-1}\left(\ell_{(r)}\right)$ and $r$ for $n=128$ at $Es/N_0=8$~dB.}
	\label{fig:ORBMetric}
\end{figure}

\begin{figure}[t]
	\centering
	\footnotesize
	\begin{tikzpicture}[scale=1]
\begin{axis}[
legend style={at={(1,0)},anchor= south east},
ymin=0,
ymax=0.08,
width=3.5in,
height=2.0in,
ytick={0.01,0.02,...,0.08},
grid=both,
xmin = 2,
xmax = 8,
xlabel = $E_s/N_0$ in dB,
ylabel = {Loss in dB},
]

\addplot[red]
table[x=snr,y=fer]{snr fer
2 0.0027008
2.5 0.0037384
3 0.0063019
3.5 0.010391
4 0.017342
4.5 0.025055
5 0.032196
5.5 0.0401
6 0.048386
6.5 0.055527
7 0.062286
7.5 0.069427
8 0.075806

};

\end{axis}
\end{tikzpicture}
	\caption{Basic ORBGRAND loss for $n=128$.}
	\label{fig:ORBLoss}
\end{figure}
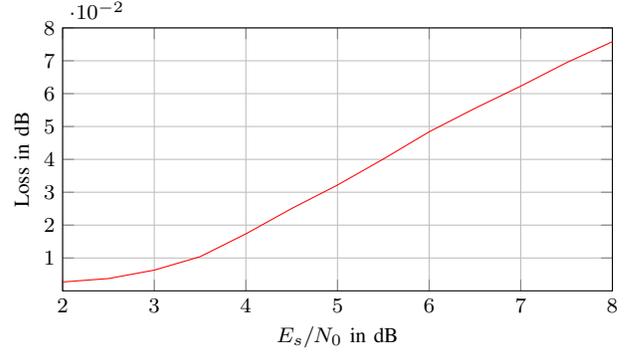

\section{Discretized soft GRAND}\label{sec:qgrand}
As discussed in Section~\ref{sec:GRANDs}, error pattern generation rank-ordered by score $\sum_{i=1}^n e_i\cdot \ell_i$ requires dynamic memory and is hard to implement in hardware efficiently. In contrast to \ac{ORBGRAND}, \ac{DSGRAND} envisages a conventional quantization of the real-valued reliabilities $\ell^n$ into a restricted number of categories determined by a quantization level without the need for received bit reliabilities to be rank-ordered. 

\subsection{Detailed algorithm}\label{sec:dsgrand_detail}
The reliabilities $\ell^n$ are quantized to discrete input weight $w^n$ via
\begin{align}
    w_i = \mu\left(\ell_i\right),~i=1,\dots,n.
\end{align}

The pseudo code of the proposed \ac{DSGRAND} is shown in Algorithm~\ref{alg:qGRAND}. A two-dimensional array $\Lambda$ of size $n\times s_\text{max}$ is required as a \ac{DP} table to store the $\texttt{boolean}$ type elements, where $s_\text{max}$ denotes the maximum score of the error patterns the \ac{DSGRAND} may generate (details are discussed in Section~\ref{sec:Smax}).

For a given score $s$, the error pattern generation is equivalent to a subset sum problem~\cite{kleinberg2006algorithm}, i.e., our target is to find all possible binary vectors $e^n$ such that 
\begin{align}
    \sum_{i=1}^n e_i\cdot w_i = s.
\end{align}
We first check the existence of error patterns (line $6-11$). The element $\Lambda[i,s+1]$ denotes whether we could find a subset in $w^i$ to achieve sum $s$. $\Lambda[i,s+1]$ is marked as true if one of the following conditions is met,
\begin{itemize}
    \item $w_i=s$: $w_i$ is equal to the sum $s$.
    \item $\Lambda[i-1,s+1]=\text{true}$: If there exists a subset in $w^{i-1}$ to achieve sum $s$, the sum $s$ could be achieved if $w_i$ is also allowed to be considered.
    \item $\Lambda[i-1,s-w_i+1]=\text{true}$: If there exists a subset in $w^{i-1}$ to achieve sum $s-w_i$, the sum $s$ is achieved by including $w_i$ in the subset.
\end{itemize}
By reusing the results of score $s-1$, we need $\mathcal{O}(n)$ operations to check the existence of error patterns with score $s$. 

If the existence $\Lambda[n,s+1]$ is marked as true (line $12-13$), the error patterns are generated with help of a stack $\Xi$ to avoid recursive functions. The stack $\Xi$ contains pattern structures $\langle i,j,d,e^n\rangle$, where 
\begin{itemize}
    \item $i$ denotes the row index in $\Lambda$.
    \item $j$ denotes the target score. Note that $j$ is not always equal to $s$.
    \item $e^n$ denotes the current error pattern.
    \item $d$ denotes the difference between the sum $s$ and the score of the current error pattern, i.e., $d=s-\sum_{i=1}^n e_i\cdot w_i$.
\end{itemize}
Firstly, an initial structure $\langle n,s,s-w_n, (0^{n-1},1)\rangle$ is pushed into the stack $\Xi$ (line $15$). Then, we pop pattern structures from the stack and push new pattern structures until it is empty to generate all error patterns with score $s$. If we get a structure $\langle i,j,d,e^n\rangle$ from $\Xi$ (line $17$),
\begin{itemize}
    \item If $d=0$, an error pattern with score $s$ is found. We check whether $\tilde{c}^n\oplus e^n \in\mathcal{C}$. If so, then $e^n$ the estimate of the noise effect $\hat{z}^n$ given the discretized soft information (line $18-22$).
    \item If $\Lambda(i-1,j+1)$ is true, there is a solution with $w_{i-1}$ but without $w_i$ to achieve the sum $j$ (line $24-26$). Thus, we push a new structure into the stack by assuming $e_{i-1}=1$ and $e_{i}=0$ for target score $j$.
    \item If $\Lambda(i-1,j-w_{i}+1)$ is true, there is a solution with $w_{i-1}$ to achieve the sum $j-w_{i}$  (line $27-29$). Thus, we push a new structure into the stack by assuming $e_{i-1}=1$ for target score $j-w_i$.
    \item The error pattern $e^n$ popped from the stack is duplicated (line $23$) only if the both conditions in line $24$ and line $27$ are met.
\end{itemize}
In general, the proposed \ac{DSGRAND} algorithm requires $\mathcal{O}\left(n\cdot S(\hat{z})\right)$ operations to construct the \ac{DP} table and $\mathcal{O}(n_p)$ operations to generate and test the error patterns. 

In contrast to the set $\mathcal{S}$ used in \ac{SGRAND}~\cite{solomon20SGRAND}, the stack $\Xi$ in our implementation is not order-sensitive, i.e., for a given score $s$, every single thread could execute line $17-29$ independently to other threads without maintaining the order of the stack. In our experiments, the size of $\Xi$ never exceeds $12$.

We note that a decoder with discrete input may find multiple codewords with a same likelihood, which is a probability zero event for the continuous case. In our implementation, \ac{DSGRAND} returns the first valid codeword that is found \emph{without} checking for the existence of other codewords with the same score.


\begin{algorithm}[t]
	\SetNoFillComment
	\DontPrintSemicolon
	\SetKwInOut{Input}{Input}\SetKwInOut{Output}{Output}
	\Input{hard decisions $\tilde{c}^n$, weights $w^n$, maximum score $s_\text{max}$}
	\Output{estimates $\hat{c}^n$, number of guesses $n_p$, decoding state $\phi$}
	\BlankLine
	$\phi\leftarrow\text{false}, n_p\leftarrow 1$\\
	\If{$\tilde{c}^n\in\mathcal{C}$}{
	    $\hat{c}^n\leftarrow \tilde{c}^n, \phi\leftarrow\text{true}$\\
	    \Return{}
	}
	\For{$s=0,1,\dots,s_\text{max}$}{
		$\Lambda[1,s+1]\leftarrow (w_1=s)$\\
		\For{$i=2,3,\dots,n$}{
		    $\Lambda[i,s+1]\leftarrow \Lambda[i-1,s+1]\text{~or~}(w_i=s)$\\
		    \If{$\Lambda[i,s+1]=\text{false}$}{
	            \If{$s-w_i\geq 0$}{
	                $\Lambda[i,s+1]\leftarrow \Lambda[i-1,s-w_i+1]$
	            }
	        }
	    }
	    \If{$\Lambda[n,s+1]=\text{false}$}{
	        \Continue
	    }
	    $\Xi\leftarrow\varnothing$\\
	    $\text{push}\left(\Xi, \langle n,s,s-w_n, (0^{n-1},1)\rangle\right)$\\
	    \While{$\Xi\neq\varnothing$}{
		    $\langle i,j,d,e^n \rangle \leftarrow \text{pop}(\Xi)$\\
		    \If{$d=0$}{
	            $n_p\leftarrow n_p+1$\\
	            \If{$\tilde{c}^n\oplus e^n \in\mathcal{C}$}{
    	            $\hat{c}^n\leftarrow \tilde{c}^n\oplus e^n,\phi\leftarrow\text{true}$\\
	                \Return{}
	            }
	        }
	        $\textcolor{red}{\underline{e}^n}\leftarrow \text{copy}(e^n)$\tcp*{if necessary}
	        \If{$i>1,~\Lambda(i-1,j+1)$}{
	           $\textcolor{red}{\underline{e}_{i}}\leftarrow 0,~\textcolor{red}{\underline{e}_{i-1}}\leftarrow 1$\\
	            $\text{push}\left(\Xi, \langle i-1,j,d+w_{i}-w_{i-1}, \textcolor{red}{\underline{e}^n}\rangle\right)$\\
	        }
	        \If{$i>1,~j-w_{i}\geq 0,~\Lambda(i-1,j-w_{i}+1)$}{
	                $e_{i-1}\leftarrow 1$\\
	                $\text{push}\left(\Xi, \langle i-1,j-w_{i},d-w_{i-1}, e^n\rangle\right)$\\
	        }
	    }
	}
	\Return{}
	\caption{$\text{DSGRAND}$}
	\label{alg:qGRAND}
\end{algorithm}

\subsection{Quantizer optimization}
For a system with conventional quantizer, the mutual information is given by $\text{I}\left(C;\tilde{\mu}(Y)\right)$. Replacing the optimal decoder input $\tau(Y^n)$ with $\bar{\mu}(\tau(Y^n))$ can be thought of as using a mismatched decoder~\cite{ganti2000mismatched}. Note that $\tilde{\mu}(\cdot)$ and $\bar{\mu}(\cdot)$ denote the channel output quantizer and \ac{LLR} quantizer, respectively, which are equivalent to the reliability quantizer $\mu(\cdot)$ (see Section~\ref{sec:quantizer}).

In this work, we consider three types of conventional quantizers.
\subsubsection*{Heuristic}
An uniform quantizer is used. The step size $\beta$ of quantization is heuristically chosen via
\begin{align}\label{eq:heuristic}
    \beta = \frac{2}{\sigma^2} \frac{1-\sigma/2}{Q}.
\end{align}
The first term $2/\sigma^2$ normalizes for the increase in reliability with \ac{SNR}, while the second term, $(1-\sigma/2)/Q$ ensures that approximately $30\%$ of the least reliable bits are accurately assigned quantized reliabilities, while the $70\%$ most reliable bits are grouped together, 
\begin{equation}
        b_i = i\beta,\quad i=1,\dots,Q-1.
\end{equation}
\subsubsection*{Uniform quantizer}
An uniform quantizer uses fixed quantization step size $\beta$, while the output values $v^Q$ are not constrained. Review the mutual information and LM-rate of systems with quantized output introduced in Section~\ref{sec:achievablerate}. We know that the mutual information $\text{I}\left(C;\tilde{\mu}(Y)\right)$ is only related to the quantization boundaries $b^{Q-1}$. The LM-rate $R_\text{LM}(Y,C,\bar{\mu}(\tau(Y)))$ is equal to the mutual information only if the output values $v^Q$ are optimized. We first optimize the step size $\beta$ for boundaries $b_i = i\beta,~i=1,\dots,Q-1$ by maximizing the mutual information,
    \begin{align}
        \beta = \argmax_{\beta} \text{I}\left(C;\tilde{\mu}(Y)\right).
    \end{align}
\subsubsection*{Non-uniform quantizer}
A non-uniform quantizer has no constraints on the boundaries and output values. Similar to the uniform quantizer, we optimize the boundaries by
    \begin{align}
        b_1,\dots,b_{Q-1} = \argmax_{b_1,\dots,b_{Q-1}} \text{I}\left(C;\tilde{\mu}(Y)\right).
    \end{align}

For given boundaries $b^{Q-1}$, the optimal output values $v^Q$ is always given by
\begin{align}\label{eq:optimal_v}
     v_i = \log \frac{ \int_{b_{i-1}}^{b_i} f_{\tau(Y)|C}(x|0)\ \text{d}x }{\int_{b_{i-1}}^{b_i} f_{\tau(Y)|C}(x|1)\ \text{d}x},~i=1,\dots,Q.
\end{align}
As mentioned in Section~\ref{sec:dsgrand_detail}, our implementation of \ac{DSGRAND} only accepts positive integers as weights. We now map the optimal output values $v^Q$ to integers. Because of eq.~\eqref{eq:weightscaling}, the output values $v^Q$ are normalized by $\alpha=1/v_1$, such that $v_1=1$. The output values $v^Q$ are then mapped to their nearest integers via round function. The resulting LM-rate is given by
\begin{align}
    R_\text{LM}(Y,C,\bar{\mu}(\tau(Y))).
\end{align}
Note that we have 
\begin{align}
    R_\text{LM}(\tilde{\mu}(Y),C,\bar{\mu}(\tau(Y))) = R_\text{LM}(Y,C,\bar{\mu}(\tau(Y)))
\end{align}
since the \ac{LLR} quantizer $\bar{\mu}(\tau(Y))$ carries the equivalent boundaries information as channel output quantizer $\tilde{\mu}(Y)$. 

Design examples for above mentioned quantizers for \ac{BPSK} modulated codeword over real-valued \ac{AWGN} channels at $4$~dB and $7$~dB are shown in Table~\ref{tab:quantizer4} and Table~\ref{tab:quantizer7}. Their achievable rates are displayed in Figure~\ref{fig:AIR_qGRAND}. 

\begin{table}[t]
    \centering
    \setlength{\tabcolsep}{12pt}
    \renewcommand\arraystretch{1.5}
    \caption{$2$ bits quantizers designed for $E_s/N_0=4~\text{dB}$.}
\begin{tabular}{ |c|c|c|}
\hline
$C_\text{BPSK}=0.7944$& input range & output value  \\
\hline
\hline
\multirow{3}{*}{\shortstack{Non-uniform\\ $R_\text{LM}=0.7884$ \\ \textcolor{red}{$R_\text{LM}=0.7883$} }} & $[0, 1.1352)$ & $1.0000$ \textcolor{red}{$(1)$}\\ 
\cline{2-3}
& $[1.1352, 2.4582)$ & $3.1534$ \textcolor{red}{$(3)$}\\ 
\cline{2-3}
& $[2.4582, 4.3560)$ & $5.8914$ \textcolor{red}{$(6)$}\\
\cline{2-3}
& $[4.3560, +\infty)$ & $10.5689$ \textcolor{red}{$(11)$}\\ 
\hline
\hline
\multirow{3}{*}{\shortstack{Uniform\\ $R_\text{LM}=0.7879$ \\ \textcolor{red}{$R_\text{LM}=0.7878$} }} & $[0, 1.3641)$ & $1.0000$ \textcolor{red}{$(1)$}\\ 
\cline{2-3}
& $[1.3641, 2.7281)$ & $3.0001$ \textcolor{red}{$(3)$}\\ 
\cline{2-3}
& $[2.7281, 4.0922)$ & $5.0003$ \textcolor{red}{$(5)$}\\
\cline{2-3}
& $[4.0922, +\infty)$ & $8.5196$ \textcolor{red}{$(8)$}\\ 
\hline
\hline
\multirow{3}{*}{\shortstack{Heuristic\\ $R_\text{LM}=0.7821$\\ \textcolor{red}{$R_\text{LM}=0.7820$}}} & $[0,0.8597)$ & $1.0000$ \textcolor{red}{$(1)$} \\ 
\cline{2-3}
& $[0.8597, 1.7194)$ & $3.0000$ \textcolor{red}{$(3)$}\\ 
\cline{2-3}
& $[1.7194, 2.5792)$ &$5.0000$ \textcolor{red}{$(5)$}\\
\cline{2-3}
& $[2.5792, +\infty)$ & $10.6525$ \textcolor{red}{$(11)$}\\ 
\hline
\end{tabular}
\vspace{10pt}
    \label{tab:quantizer4}
\end{table}

\begin{table}[t]
    \centering
    \setlength{\tabcolsep}{12pt}
    \renewcommand\arraystretch{1.5}
    \caption{$2$ bits quantizers designed for $E_s/N_0=7~\text{dB}$.}
\begin{tabular}{ |c|c|c|}
\hline
$C_\text{BPSK}=0.9507$&input range & output value  \\
\hline
\hline
\multirow{3}{*}{\shortstack{Non-uniform\\ $R_\text{LM}=0.9486$ \\ \textcolor{red}{$R_\text{LM}=0.9485$} }} & $[0, 1.3878)$ & $1.0000$ \textcolor{red}{$(1)$}\\ 
\cline{2-3}
& $[1.3878, 3.0636)$ & $3.1960$ \textcolor{red}{$(3)$}\\ 
\cline{2-3}
& $[3.0636, 5.6249)$ & $6.1455$ \textcolor{red}{$(6)$}\\
\cline{2-3}
& $[5.6249, +\infty)$ & $11.8671$ \textcolor{red}{$(12)$}\\ 
\hline
\hline
\multirow{3}{*}{\shortstack{Uniform\\ $R_\text{LM}=0.9483$\\ \textcolor{red}{$R_\text{LM}=0.9483$} }} & $[0, 1.7278)$ & $1.0000$ \textcolor{red}{$(1)$}\\ 
\cline{2-3}
& $[1.7278, 3.4557)$ & $3.0000$ \textcolor{red}{$(3)$}\\ 
\cline{2-3}
& $[3.4557, 5.1835)$ & $5.0001$ \textcolor{red}{$(5)$}\\
\cline{2-3}
& $[5.1835, +\infty)$ & $9.1775$ \textcolor{red}{$(9)$}\\ 
\hline
\hline
\multirow{3}{*}{\shortstack{Heuristic\\ $R_\text{LM}=0.9481$\\ \textcolor{red}{$R_\text{LM}=0.9481$} }} & $[0, 1.9463)$ & $1.0000$ \textcolor{red}{$(1)$} \\ 
\cline{2-3}
& $[1.9463, 3.8925)$ & $3.0001$ \textcolor{red}{$(3)$}\\ 
\cline{2-3}
& $[3.8925, 5.8388)$ & $5.0003$ \textcolor{red}{$(5)$}\\
\cline{2-3}
& $[5.8388, +\infty)$ & $8.6992$ \textcolor{red}{$(9)$}\\ 
\hline
\end{tabular}
\vspace{10pt}
    \label{tab:quantizer7}
\end{table}

\begin{figure}[t]
	\centering
	\footnotesize
	\begin{tikzpicture}[scale=1]
\begin{axis}[
legend style={at={(0.5,-0.2)},anchor=north,draw=none,/tikz/every even column/.append style={column sep=1mm},cells={align=left}},
legend columns=2,
legend cell align=left,
ymin=0.95,
ymax=0.9625,
width=3.5in,
height=3.0in,
y tick label style={
                /pgf/number format/.cd,
                    fixed relative,
                    precision=3,
                    zerofill,
                /tikz/.cd,
            },
grid=both,
xmin = 7,
xmax = 7.5,
xlabel = $E_s/N_0$ in dB,
ylabel = {bits/channel use},
]

\addplot[black]
table[x=snr,y=fer]{snr fer
7 0.95068
7.1 0.95379
7.2 0.95677
7.3 0.95961
7.4 0.96233
7.5 0.96491
};\addlegendentry{BPSK capacity}

\addplot[gray, mark=+]
table[x=snr,y=fer]{snr fer
7 0.9487
7.1 0.95181
7.2 0.95483
7.3 0.95774
7.4 0.9605
7.5 0.96314
};\addlegendentry{ORBGRAND}

\addplot[red,dashed]
table[x=snr,y=fer]{snr fer
7 0.94854
7.1 0.95177
7.2 0.95485
7.3 0.9578
7.4 0.96062
7.5 0.9633
};\addlegendentry{non-uniform $q=2$}

\addplot[blue,dashed]
table[x=snr,y=fer]{snr fer
7 0.94828
7.1 0.95151
7.2 0.95461
7.3 0.95756
7.4 0.96038
7.5 0.96307
};\addlegendentry{uniform $q=2$}

\addplot[brown,dashed]
table[x=snr,y=fer]{snr fer
7 0.94814
7.1 0.95133
7.2 0.95439
7.3 0.95731
7.4 0.96009
7.5 0.96274
};\addlegendentry{heuristic $q=2$}

\addplot[red]
table[x=snr,y=fer]{snr fer
7 0.95013
7.1 0.95328
7.2 0.95628
7.3 0.95915
7.4 0.96189
7.5 0.9645
};\addlegendentry{non-uniform $q=3$}

\addplot[blue]
table[x=snr,y=fer]{snr fer
7 0.94999
7.1 0.95314
7.2 0.95615
7.3 0.95903
7.4 0.96177
7.5 0.96439

};\addlegendentry{uniform $q=3$}

\addplot[brown]
table[x=snr,y=fer]{snr fer
7 0.94999
7.1 0.95313
7.2 0.95614
7.3 0.95901
7.4 0.96175
7.5 0.96435
};\addlegendentry{heuristic $q=3$}

\addplot[blue,dash dot]
table[x=snr,y=fer]{snr fer
7 0.94133
7.1 0.94489
7.2 0.94831
7.3 0.95158
7.4 0.95472
7.5 0.95771

};\addlegendentry{uniform/non-uniform $q=1$}

\addplot[brown,dash dot]
table[x=snr,y=fer]{snr fer
7 0.94036
7.1 0.9438
7.2 0.94709
7.3 0.95025
7.4 0.95327
7.5 0.95616
};\addlegendentry{heuristic $q=1$}

\end{axis}
\end{tikzpicture}
	\caption{LM-rates of the quantizers. The output values $v^Q$ are mapped to their nearest integers.}
	\label{fig:AIR_qGRAND}
\end{figure}
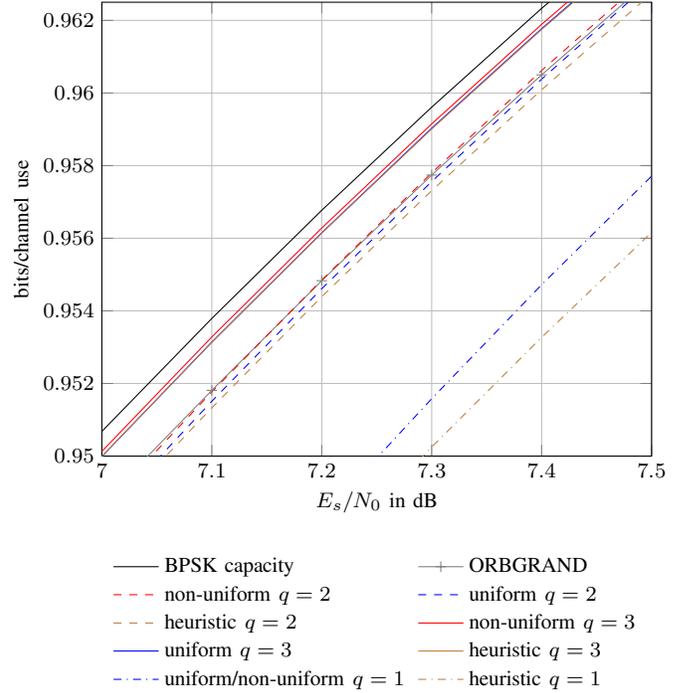

\section{Simulation results}\label{sec:numerical}
For simulated performance evaluation, we consider \ac{BPSK} modulated codewords over real-valued \ac{AWGN} channels.

\subsection{Maximum score of DSGRAND}\label{sec:Smax}
We start with the distribution of the score of the correct noise effect $S(Z^n)$. Since 1) the codewords are uniformly distributed on the codebook $\mathcal{C}$; and 2) the codeword is transmitted over a symmetric channel, i.e., $f_{Y|C}(y|0)=f_{Y|C}(-y|1)$, we could use the all-zero codeword assumption to evaluate the expected score. Assuming that all-zero codeword $0^n$ is transmitted, the \acp{LLR} are \ac{i.i.d.} Gaussian \acp{RV} with mean $2/\sigma^2$ and variance $4/\sigma^2$
\begin{align}
    \tau(Y_i) \overset{\text{i.i.d.}}{\sim} f_{\tau(Y)}(a) = p_\text{G}\left(a\left|\frac{2}{\sigma^2},\frac{4}{\sigma^2}\right.\right).
\end{align}
We have now the score function 
\begin{align}
    S(Z^n)&=\sum_{i=1}^{n} |\tau(Y_i)|\cdot Z_i\\
    &=\sum_{i=1}^{n} \tau(Y_i)\cdot \mathbbm{1}\left\{\tau(Y_i)>0\right\}
\end{align}
where the indicator function $\mathbbm{1}\{\cdot\}$ equals $1$ if the proposition is true and $0$ otherwise. Define mixed-type \acp{RV} $V_i,~i=1,\dots,n$
\begin{align}
    V_i\triangleq \tau(Y_i)\cdot \mathbbm{1}\left\{\tau(Y_i)>0\right\}. 
\end{align}
$V_1,\dots,V_n$ are \ac{i.i.d.}
\begin{equation}
  V_i \overset{\text{i.i.d.}}{\sim}f_{V}(a)=\left\{
    \begin{aligned}
      f_{\tau(Y)}(a), & \text{~if~} a > 0\\
      \delta(a) \int_{-\infty}^0 f_{\tau(Y)}(x)\ \text{d}x, & \text{~if~} a = 0\\
      0, & \text{~otherwise} 
    \end{aligned}\right.
\end{equation}
where $\delta(\cdot)$ denotes the Dirac delta function. Thus, we have the \ac{PDF} of the score $S(Z^n)$
\begin{align}
    f_{S(Z^n)}(a) = f_V(a) ^{\circledast n},
\end{align}
where $(\cdot)^{\circledast n}$ denotes the $n$-th convolution power.

The \ac{GRAND} algorithm is now terminated if the score of the current error pattern exceeds a threshold $s_\text{max}$. The performance loss on \ac{BLER} is upper bounded\footnote{The exact performance loss on \ac{BLER} is the joint probability of event \enquote{ML decoder is capable to find the transmitted codeword for $Y^n$} and \enquote{\ac{GRAND} with a maximum score misdecodes $Y^n$}. Eq.~\eqref{eq:SmaxLoss}, is the probability of event \enquote{\ac{GRAND} with a maximum score misdecodes $Y^n$}, which is a upper bound of the performance loss.} by
\begin{align}\label{eq:SmaxLoss}
   \int_{s_\text{max}}^\infty f_{S(Z^n)}(a)\ \text{d}a.
\end{align}
In addition, such a threshold test enables the decoder to reject a decision when it is not reliable enough, which reduces the number of undetected errors if the threshold is carefully optimized~(see, e.g., \cite{forney1968exponential}). For a given threshold $s_\text{max}$, we define a binary RV $\Phi$ as
\begin{equation}\label{eq:rv_undetected}
    \Phi = \mathbbm{1}\left\{S(z^n)\leq s_\text{max}\right\}.
\end{equation}
The proposition of the indicator function \eqref{eq:rv_undetected} reads as \enquote{GRAND finds an estimate $\hat{c}^n$ with a score smaller than $s_\text{max}$}. Then, the undetected error probability of the algorithm is given as
\begin{align}\label{eq:undetected_error_prob}
    \text{Pr}\left(\hat{C}^n\neq C^n,\Phi = 1\right).
\end{align}
Observe that the overall error probability is equal to the summation of detected and undetected error probabilities, i.e., we have
\begin{align}\label{eq:overall_error_prob}
    \text{Pr}\left(\hat{C}^n\neq U^n\right) = \sum_{\phi\in\{0,1\}}\text{Pr}\left(\hat{C}^n \neq C^n, \Phi = \phi\right)
\end{align}
which simply follows from the law of total probability. The parameter $s_\text{max}$ controls the \ac{BLER} and \ac{uBLER} tradeoff~\cite{forney1968exponential,hof2010performance}. In particular, \eqref{eq:undetected_error_prob} becomes equal to the left-hand side of~\eqref{eq:overall_error_prob} if $s_\text{max} = \infty$.

For a \ac{DSGRAND}, the \acp{LLR} are quantized by function $\bar{\mu}(\cdot)$. The \ac{PMF} of discrete \ac{RV} $\bar{\mu}(V)$ is given by
\begin{equation}f_{\bar{\mu}(V)}(a)=\left\{
    \begin{aligned}
        \int_{-\infty}^0 f_{\tau(Y)}(x)\ \text{d}x, &\text{~if~}a = 0\\
        \int_{b_{i-1}}^{b_i}f_{\tau(Y)}(x)\ \text{d}x, &\text{~if~}a = v_1,\dots,v_Q
    \end{aligned}\right.
\end{equation}
where $Q,b_i,v_i$ are the quantization parameters (see Section~\ref{sec:quantizer}). Thus, we have the \ac{PMF} of the score $S(Z^n)$
\begin{align}
    f_{S(Z^n)}(a) = f_{\bar{\mu}(V)}(a) ^{\circledast n}.
\end{align}
In our simulations of \ac{DSGRAND}, we always use a $s_\text{max}$ such that the performance loss eq.~\eqref{eq:SmaxLoss} is close to the target \ac{BLER}.

\subsection{Error correction performance}
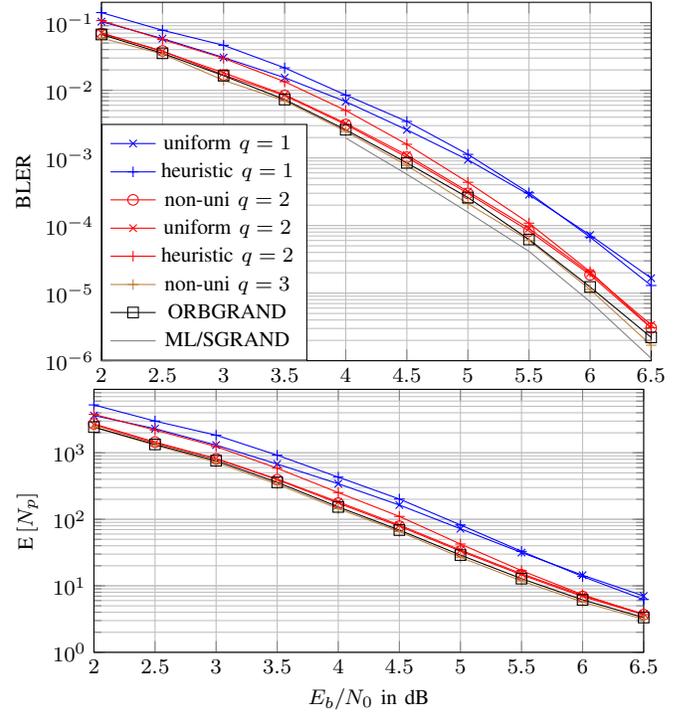
\begin{figure}[t]
	\centering
	\footnotesize
	\begin{tikzpicture}[scale=1]
\begin{semilogyaxis}[
legend style={at={(0,0)},anchor= south west},
ymin=0.000001,
ymax=0.2,
width=3.5in,
height=2.5in,
grid=both,
xmin = 2.0,
xmax = 6.5,
ylabel = BLER,
xtick=data,
]

\addplot[blue, mark = x]
table[x=snr,y=fer]{snr fer
2 0.10352
2.5 0.058166
3 0.030529
3.5 0.015356
4 0.0067853
4.5 0.0025976
5 0.00093596
5.5 0.00028534
6 7.2221e-05
6.5 1.6643e-05
};\addlegendentry{uniform $q=1$}

\addplot[blue, mark = +]
table[x=snr,y=fer]{snr fer
2 0.14079
2.5 0.077267
3 0.046249
3.5 0.021463
4 0.0085062
4.5 0.0034609
5 0.0011267
5.5 0.00030395
6 6.741e-05
6.5 1.2955e-05
};\addlegendentry{heuristic $q=1$}

\addplot[red, mark = o]
table[x=snr,y=fer]{snr fer
2 0.06921
2.5 0.038203
3 0.016843
3.5 0.0083758
4 0.0031099
4.5 0.001022
5 0.00029829
5.5 8.2813e-05
6 1.8678e-05
6.5 3.0522e-06
};\addlegendentry{non-uni $q=2$}

\addplot[red, mark = x]
table[x=snr,y=fer]{snr fer
2 0.070689
2.5 0.037515
3 0.018177
3.5 0.0085852
4 0.0032452
4.5 0.0010994
5 0.00031831
5.5 9.0314e-05
6 1.9753e-05
6.5 3.3828e-06
};\addlegendentry{uniform $q=2$}

\addplot[red, mark = +]
table[x=snr,y=fer]{snr fer
2 0.10737
2.5 0.056273
3 0.029757
3.5 0.013401
4 0.0050275
4.5 0.0015872
5 0.00043531
5.5 0.00010832
6 2.0715e-05
6.5 3.0776e-06
};\addlegendentry{heuristic $q=2$}

\addplot[brown, mark = +]
table[x=snr,y=fer]{snr fer
2 0.059154
2.5 0.034417
3 0.014036
3.5 0.0069798
4 0.0024585
4.5 0.00077424
5 0.0002108
5.5 6.001e-05
6 1.132e-05
6.5 1.6957e-06
};\addlegendentry{non-uni $q=3$}

\addplot[black, mark = square]
table[x=snr,y=fer]{snr fer
2 0.066844
2.5 0.035278
3 0.016422
3.5 0.0072939
4 0.002606
4.5 0.00084392
5 0.00025929
5.5 6.211e-05
6 1.2339e-05
6.5 2.2128e-06
};\addlegendentry{ORBGRAND}

\addplot[gray]
table[x=snr,y=fer]{snr fer
4 0.0019668
4.5 0.00057294
5 0.00015705
5.5 4.1407e-05
6 7.4712e-06
6.5 1.0852e-06
};\addlegendentry{ML/SGRAND}

\end{semilogyaxis}
\end{tikzpicture}

\begin{tikzpicture}[scale=1]
\begin{semilogyaxis}[
ymin=1,
ymax=9000,
width=3.5in,
height=2in,
grid=both,
xmin = 2,
xmax = 6.5,
xlabel = $E_b/N_0$ in dB,
ylabel = {$\text{E}\left[N_p\right]$},
xtick=data,
]

\addplot[blue, mark = x]
table[x=snr,y=fer]{snr fer
2 3633.6
2.5 2309.2
3 1306.6
3.5 677.62
4 342.69
4.5 164.04
5 72.031
5.5 31.624
6 14.45
6.5 7.04
};

\addplot[blue, mark = +]
table[x=snr,y=fer]{snr fer
2 5251.3
2.5 3000.5
3 1841.2
3.5 921.3
4 430.71
4.5 201.76
5 82.322
5.5 33.013
6 13.79
6.5 6.2201
};

\addplot[red, mark = o]
table[x=snr,y=fer]{snr fer
2 2720.7
2.5 1451.3
3 812.29
3.5 394.83
4 174.91
4.5 78.62
5 33.231
5.5 14.629
6 6.8953
6.5 3.6809
};

\addplot[red, mark = x]
table[x=snr,y=fer]{snr fer
2 2639.2
2.5 1406.7
3 819.6
3.5 401.94
4 182.68
4.5 81.91
5 34.475
5.5 15.205
6 7.1488
6.5 3.7784
};

\addplot[red, mark = +]
table[x=snr,y=fer]{snr fer
2 3780.4
2.5 2204.1
3 1243.4
3.5 586.06
4 252.18
4.5 110.52
5 42.475
5.5 16.926
6 7.3653
6.5 3.6848
};

\addplot[brown, mark = +]
table[x=snr,y=fer]{snr fer
2 2415.8
2.5 1300.9
3 719.97
3.5 336.68
4 144.1
4.5 65.339
5 26.781
5.5 11.832
6 5.7024
6.5 3.1383
};

\addplot[black, mark = square]
table[x=snr,y=fer]{snr fer
2 2437.1
2.5 1338.1
3 761.87
3.5 359.8
4 153.43
4.5 68.891
5 28.992
5.5 12.807
6 6.1544
6.5 3.3331
};

\end{semilogyaxis}
\end{tikzpicture}
	\caption{$(31,16)$ BCH code under DSGRANDs and ORBGRAND.}
	\label{fig:bch_31_16}
\end{figure}

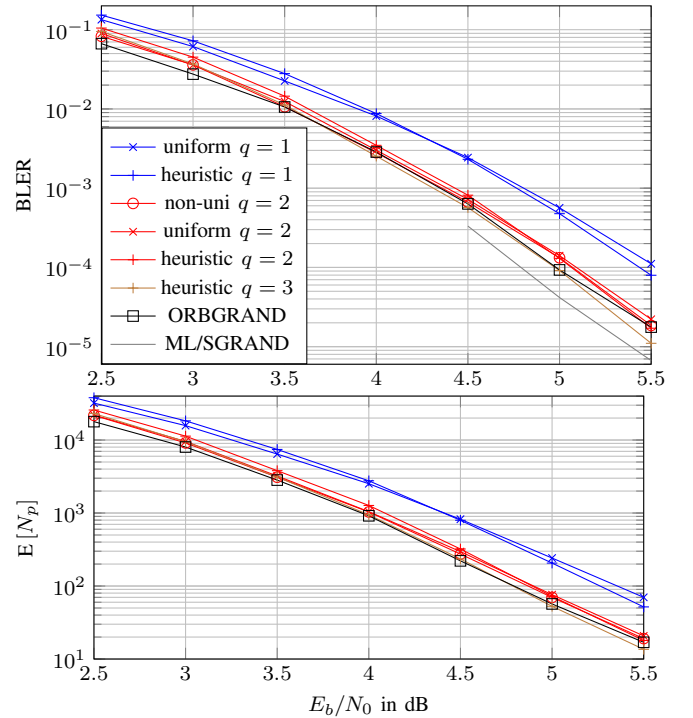
\begin{figure}[t]
	\centering
	\footnotesize
	\begin{tikzpicture}[scale=1]
\begin{semilogyaxis}[
legend style={at={(0,0)},anchor= south west},
ymin=0.000006,
ymax=0.2,
width=3.5in,
height=2.5in,
grid=both,
xmin = 2.5,
xmax = 5.5,
ylabel = BLER
]

\addplot[blue, mark = x]
table[x=snr,y=fer]{snr fer
2.5 0.13384
3 0.0613
3.5 0.022583
4 0.0081381
4.5 0.0024322
5 0.00056486
5.5 0.00011115
};\addlegendentry{uniform $q=1$}

\addplot[blue, mark = +]
table[x=snr,y=fer]{snr fer
2.5 0.15282
3 0.072378
3.5 0.02792
4 0.0087466
4.5 0.0023156
5 0.00047673
5.5 7.9668e-05
};\addlegendentry{heuristic $q=1$}

\addplot[red, mark = o]
table[x=snr,y=fer]{snr fer
2.5 0.082582
3 0.036004
3.5 0.010674
4 0.0028014
4.5 0.0006685
5 0.00013242
5.5 1.8525e-05
};\addlegendentry{non-uni $q=2$}

\addplot[red, mark = x]
table[x=snr,y=fer]{snr fer
2.5 0.089701
3 0.03582
3.5 0.012415
4 0.0030169
4.5 0.00072255
5 0.00014064
5.5 2.2164e-05
};\addlegendentry{uniform $q=2$}

\addplot[red, mark = +]
table[x=snr,y=fer]{snr fer
2.5 0.10489
3 0.045052
3.5 0.014494
4 0.0034733
4.5 0.00081642
5 0.00013014
5.5 1.7091e-05

};\addlegendentry{heuristic $q=2$}

\addplot[brown, mark = +]
table[x=snr,y=fer]{snr fer
2.5 0.094922
3 0.036928
3.5 0.011235
4 0.0025352
4.5 0.00056894
5 9.1327e-05
5.5 1.1027e-05
};\addlegendentry{heuristic $q=3$}

\addplot[black, mark = square]
table[x=snr,y=fer]{snr fer
2.5 0.066643
3 0.02753
3.5 0.010586
4 0.0028599
4.5 0.00063354
5 9.3016e-05
5.5 1.7666e-05
};\addlegendentry{ORBGRAND}

\addplot[gray]
table[x=snr,y=fer]{snr fer
4.5 0.00033228
5 4.1697e-05
5.5 6.6785e-06
};\addlegendentry{ML/SGRAND}

\end{semilogyaxis}
\end{tikzpicture}

\begin{tikzpicture}[scale=1]
\begin{semilogyaxis}[
ymin=10,
ymax=40000,
width=3.5in,
height=2in,
grid=both,
xmin = 2.5,
xmax = 5.5,
xlabel = $E_b/N_0$ in dB,
ylabel = {$\text{E}\left[N_p\right]$},
]

\addplot[blue, mark = x]
table[x=snr,y=fer]{snr fer
2.5 32084
3 15755
3.5 6421.3
4 2521.1
4.5 822.25
5 241.82
5.5 70.03
};

\addplot[blue, mark = +]
table[x=snr,y=fer]{snr fer
2.5 37996
3 18318
3.5 7430.9
4 2762.9
4.5 795.71
5 206.25
5.5 51.766
};

\addplot[red, mark = o]
table[x=snr,y=fer]{snr fer
2.5 21764
3 9126.4
3.5 3095.4
4 1034
4.5 274.12
5 69.901
5.5 19.063
};

\addplot[red, mark = x]
table[x=snr,y=fer]{snr fer
2.5 21997
3 9564.8
3.5 3217.4
4 1044
4.5 296.48
5 76.16
5.5 20.743
};

\addplot[red, mark = +]
table[x=snr,y=fer]{snr fer
2.5 25792
3 11310
3.5 3806.1
4 1269.8
4.5 321.58
5 72.374
5.5 17.966
};

\addplot[brown, mark = +]
table[x=snr,y=fer]{snr fer
2.5 23079
3 9476
3.5 3116.9
4 945.42
4.5 235.12
5 52.745
5.5 13.585
};

\addplot[black, mark = square]
table[x=snr,y=fer]{snr fer
2.5 17873
3 8018
3.5 2827
4 916.33
4.5 221.22
5 56.822
5.5 16.958

};

\end{semilogyaxis}
\end{tikzpicture}
	\caption{$(63,45)$ BCH code under DSGRANDs and ORBGRAND.}
	\label{fig:bch_63_45}
\end{figure}

\begin{figure}[t]
	\centering
	\footnotesize
	\begin{tikzpicture}[scale=1]
\begin{semilogyaxis}[
legend style={at={(0,0)},anchor= south west},
ymin=0.000002,
ymax=0.5,
width=3.5in,
height=2.5in,
grid=both,
xmin = 3.0,
xmax = 6.5,
ylabel = BLER
]

\addplot[blue, mark = x]
table[x=snr,y=fer]{snr fer
3 0.44462
3.5 0.23723
4 0.11143
4.5 0.041773
5 0.011888
5.5 0.0030357
6 0.00063312
6.5 0.00011462
};\addlegendentry{uniform $q=1$}

\addplot[blue, mark = +]
table[x=snr,y=fer]{snr fer
3 0.43323
3.5 0.24015
4 0.11079
4.5 0.040646
5 0.010375
5.5 0.0023181
6 0.00039807
6.5 4.8858e-05
};\addlegendentry{heuristic $q=1$}

\addplot[red, mark = o]
table[x=snr,y=fer]{snr fer
3 0.32736
3.5 0.15474
4 0.070649
4.5 0.017581
5 0.0038502
5.5 0.00071549
6 0.00011156
6.5 1.314e-05
};\addlegendentry{non-uni $q=2$}

\addplot[red, mark = x]
table[x=snr,y=fer]{snr fer
3 0.33876
3.5 0.16423
4 0.07097
4.5 0.020436
5 0.0041474
5.5 0.00082509
6 0.00013168
6.5 1.6903e-05
};\addlegendentry{uniform $q=2$}

\addplot[red, mark = +]
table[x=snr,y=fer]{snr fer
3 0.34853
3.5 0.17226
4 0.069364
4.5 0.018708
5 0.003499
5.5 0.00060176
6 7.5545e-05
6.5 6.8115e-06
};\addlegendentry{heuristic $q=2$}

\addplot[brown, mark = +]
table[x=snr,y=fer]{snr fer
3 0.32573
3.5 0.14599
4 0.064226
4.5 0.015026
5 0.0027019
5.5 0.00041358
6 4.6777e-05
6.5 4.2046e-06
};\addlegendentry{heuristic $q=3$}

\addplot[black, mark = square]
table[x=snr,y=fer]{snr fer
3 0.33938
3.5 0.16196
4 0.058055
4.5 0.017662
5 0.0046943
5.5 0.00099729
6 0.0001845
6.5 3.3795e-05

};\addlegendentry{ORBGRAND}

\addplot[gray]
table[x=snr,y=fer]{snr fer
4.5 0.010429
5 0.0018977
5.5 0.00027871
6 2.9689e-05
6.5 2.2574e-06

};\addlegendentry{ML/SGRAND}

\end{semilogyaxis}
\end{tikzpicture}

\begin{tikzpicture}[scale=1]
\begin{semilogyaxis}[
ymin=1,
ymax=9000,
width=3.5in,
height=2in,
grid=both,
xmin = 3.0,
xmax = 6.5,
xlabel = $E_b/N_0$ in dB,
ylabel = {$\text{E}\left[N_p\right]$}
]

\addplot[blue, mark = x]
table[x=snr,y=fer]{snr fer
3 6538.1
3.5 3824.6
4 1742.1
4.5 803.99
5 262.52
5.5 81.09
6 26.101
6.5 9.0639
};

\addplot[blue, mark = +]
table[x=snr,y=fer]{snr fer
3 6333.3
3.5 3720.9
4 1876.1
4.5 761
5 224.46
5.5 60.008
6 15.942
6.5 4.8895
};

\addplot[red, mark = o]
table[x=snr,y=fer]{snr fer
3 5508.4
3.5 2816.1
4 1134.5
4.5 360.3
5 90.155
5.5 23.199
6 6.9337
6.5 2.7114
};

\addplot[red, mark = x]
table[x=snr,y=fer]{snr fer
3 6170.2
3.5 2999.9
4 1173.6
4.5 384.64
5 100.32
5.5 25.79
6 7.7435
6.5 3.0309
};

\addplot[red, mark = +]
table[x=snr,y=fer]{snr fer
3 5293.7
3.5 2946.8
4 1118.6
4.5 353.77
5 85.009
5.5 20.08
6 5.4471
6.5 2.2587
};

\addplot[brown, mark = +]
table[x=snr,y=fer]{snr fer
3 5317.2
3.5 2472.6
4 1013.5
4.5 284.74
5 64.021
5.5 14.896
6 4.3009
6.5 1.9736
};

\addplot[black, mark = square]
table[x=snr,y=fer]{snr fer
3 5469.7
3.5 2627.7
4 1015.1
4.5 377.94
5 104.81
5.5 25.742
6 7.2224
6.5 2.6382

};

\end{semilogyaxis}
\end{tikzpicture}
	\caption{$(127,113)$ BCH code under DSGRANDs and ORBGRAND.}
	\label{fig:bch_127_113}
\end{figure}

\begin{figure}[t]
	\centering
	\footnotesize
	\begin{tikzpicture}[scale=1]
\begin{semilogyaxis}[
legend style={at={(0,0)},anchor= south west},
ymin=0.000005,
ymax=0.11,
width=3.5in,
height=2.5in,
grid=both,
xmin = 3.5,
xmax = 5.5,
ylabel = BLER
]

\addplot[blue, mark = x]
table[x=snr,y=fer]{snr fer
3.5 0.10165
4 0.033225
4.5 0.0079965
5 0.0016051
5.5 0.00024525
};\addlegendentry{uniform $q=1$}

\addplot[blue, mark = +]
table[x=snr,y=fer]{snr fer
3.5 0.10165
4 0.030836
4.5 0.0074337
5 0.0010995
5.5 0.00013437
};\addlegendentry{heuristic $q=1$}

\addplot[red, mark = o]
table[x=snr,y=fer]{snr fer
3.5 0.050413
4 0.012161
4.5 0.0021324
5 0.00023807
5.5 2.767e-05
};\addlegendentry{non-uni $q=2$}

\addplot[red, mark = x]
table[x=snr,y=fer]{snr fer
3.5 0.045455
4 0.012595
4.5 0.0024285
5 0.00029942
5.5 3.5036e-05
};\addlegendentry{uniform $q=2$}

\addplot[red, mark = +]
table[x=snr,y=fer]{snr fer
3.5 0.047107
4 0.012595
4.5 0.0022805
5 0.00020125
5.5 1.7717e-05
};\addlegendentry{heuristic $q=2$}

\addplot[brown, mark = +]
table[x=snr,y=fer]{snr fer
3.5 0.041322
4 0.010858
4.5 0.0014808
5 0.00012271
5.5 9.9534e-06
};\addlegendentry{heuristic $q=3$}

\addplot[black, mark = square]
table[x=snr,y=fer]{snr fer
3.5 0.041322
4 0.011943
4.5 0.0021028
5 0.00033624
5.5 5.6934e-05
};\addlegendentry{ORBGRAND}

\addplot[gray]
table[x=snr,y=fer]{snr fer
5 6.8721e-05
5.5 5.972e-06
};\addlegendentry{ML/SGRAND}

\end{semilogyaxis}
\end{tikzpicture}

\begin{tikzpicture}[scale=1]
\begin{semilogyaxis}[
ymin=50,
ymax=110000,
width=3.5in,
height=2in,
grid=both,
xmin = 3.5,
xmax = 5.5,
xlabel = $E_b/N_0$ in dB,
ylabel = {$\text{E}\left[N_p\right]$}
]

\addplot[blue, mark = x]
table[x=snr,y=fer]{snr fer
3.5 1.0087e+05
4 36475
4.5 11321
5 2714.5
5.5 579.87
};

\addplot[blue, mark = +]
table[x=snr,y=fer]{snr fer
3.5 1.0343e+05
4 35566
4.5 10252
5 2061.4
5.5 353.01
};

\addplot[red, mark = o]
table[x=snr,y=fer]{snr fer
3.5 59586
4 16647
4.5 3433.1
5 595.08
5.5 98.529
};

\addplot[red, mark = x]
table[x=snr,y=fer]{snr fer
3.5 60183
4 16480
4.5 3791.4
5 669.32
5.5 115
};

\addplot[red, mark = +]
table[x=snr,y=fer]{snr fer
3.5 60591
4 15931
4.5 3479.3
5 539.61
5.5 75.639
};

\addplot[brown, mark = +]
table[x=snr,y=fer]{snr fer
3.5 52161
4 13587
4.5 2436.3
5 348.2
5.5 50.49
};

\addplot[black, mark = square]
table[x=snr,y=fer]{snr fer
3.5 54182
4 14535
4.5 3480.7
5 673.44
5.5 134.19
};

\end{semilogyaxis}
\end{tikzpicture}
	\caption{$(127,106)$ BCH code under DSGRANDs and ORBGRAND.}
	\label{fig:bch_127_106}
\end{figure}

\begin{figure}[t]
	\centering
	\footnotesize
	\begin{tikzpicture}[scale=1]
\begin{semilogyaxis}[
legend style={at={(0,0)},anchor= south west},
ymin=0.0000005,
ymax=0.4,
width=3.5in,
height=2.5in,
grid=both,
xmin = 4.0,
xmax = 7.0,
ylabel = BLER
]

\addplot[blue, mark = x]
table[x=snr,y=fer]{snr fer
4 0.37979
4.5 0.17389
5 0.056227
5.5 0.014016
6 0.0027405
6.5 0.00046071
7 6.743e-05
};\addlegendentry{uniform $q=1$}

\addplot[blue, mark = +]
table[x=snr,y=fer]{snr fer
4 0.38619
4.5 0.18216
5 0.058059
5.5 0.013786
6 0.0021066
6.5 0.00025981
7 2.242e-05
};\addlegendentry{heuristic $q=1$}

\addplot[red, mark = o]
table[x=snr,y=fer]{snr fer
4 0.27749
4.5 0.090045
5 0.023718
5.5 0.0040784
6 0.00049354
6.5 5.3612e-05
7 4.68e-06
};\addlegendentry{non-uni $q=2$}

\addplot[red, mark = x]
table[x=snr,y=fer]{snr fer
4 0.28772
4.5 0.097893
5 0.026374
5.5 0.0042852
6 0.00060335
6.5 7.4288e-05
7 6.99e-06
};\addlegendentry{uniform $q=2$}

\addplot[red, mark = +]
table[x=snr,y=fer]{snr fer
4 0.29156
4.5 0.096654
5 0.023535
5.5 0.003527
6 0.00036902
6.5 2.917e-05
7 1.805e-06
};\addlegendentry{heuristic $q=2$}

\addplot[brown, mark = +]
table[x=snr,y=fer]{snr fer
4 0.25575
4.5 0.082611
5 0.018315
5.5 0.0022977
6 0.0002264
6.5 1.6028e-05
7 8.8e-07
};\addlegendentry{heuristic $q=3$}

\addplot[black, mark = square]
table[x=snr,y=fer]{snr fer
4 0.28508
4.5 0.10039
5 0.024543
5.5 0.0050338
6 0.0009855
6.5 0.00015496
7 1.8465e-05
};\addlegendentry{ORBGRAND}

\addplot[gray]
table[x=snr,y=fer]{snr fer
5.5 0.0016586
6 0.00011826
6.5 1.1247e-05
7 5.3465e-07
};\addlegendentry{ML/SGRAND}

\end{semilogyaxis}
\end{tikzpicture}

\begin{tikzpicture}[scale=1]
\begin{semilogyaxis}[
ytick={1,10,100,1000,10000,100000},
ymin=1,
ymax=30000,
width=3.5in,
height=2in,
grid=both,
xmin = 4.0,
xmax = 7.0,
xlabel = $E_b/N_0$ in dB,
ylabel = {$\text{E}\left[N_p\right]$}
]

\addplot[blue, mark = x]
table[x=snr,y=fer]{snr fer
4 24441
4.5 10626
5 3676.3
5.5 1073.8
6 256.38
6.5 62.056
7 16.665
};

\addplot[blue, mark = +]
table[x=snr,y=fer]{snr fer
4 25162
4.5 12053
5 3878.7
5.5 1009.9
6 181.72
6.5 31.898
7 6.5302
};

\addplot[red, mark = o]
table[x=snr,y=fer]{snr fer
4 18422
4.5 6251.6
5 1609.3
5.5 329.22
6 52.114
6.5 10.333
7 3.059
};

\addplot[red, mark = x]
table[x=snr,y=fer]{snr fer
4 17807
4.5 6431.8
5 1624.3
5.5 361.55
6 60.16
6.5 12.622
7 3.7267
};

\addplot[red, mark = +]
table[x=snr,y=fer]{snr fer
4 19511
4.5 6464.2
5 1572.2
5.5 278.73
6 38.264
6.5 6.8142
7 2.2338
};

\addplot[brown, mark = +]
table[x=snr,y=fer]{snr fer
4 17334
4.5 5432.4
5 1221
5.5 202.34
6 25.608
6.5 4.8459
7 1.8791
};

\addplot[black, mark = square]
table[x=snr,y=fer]{snr fer
4 16638
4.5 6486.8
5 1738.3
5.5 353.38
6 77.822
6.5 15.715
7 3.8631
};

\end{semilogyaxis}
\end{tikzpicture}
	\caption{$(255,239)$ BCH code under DSGRANDs and ORBGRAND.}
	\label{fig:bch_255_239}
\end{figure}
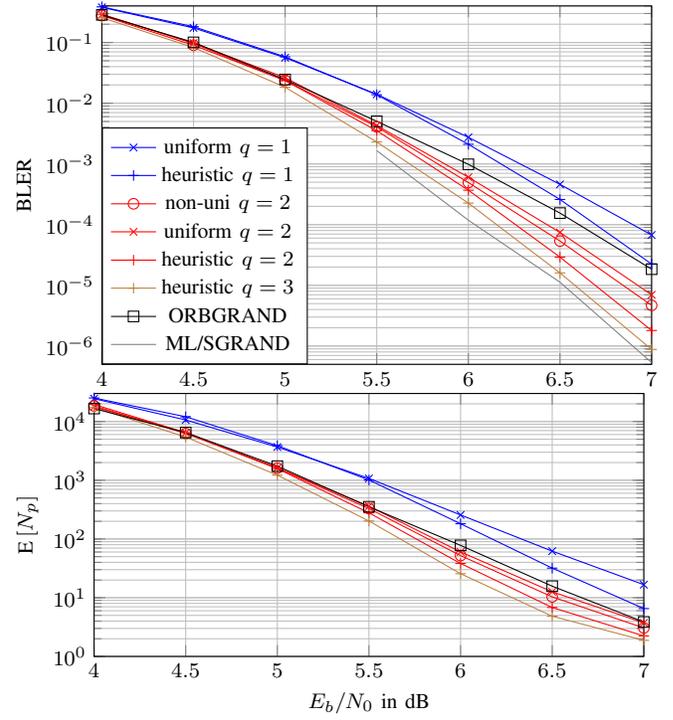

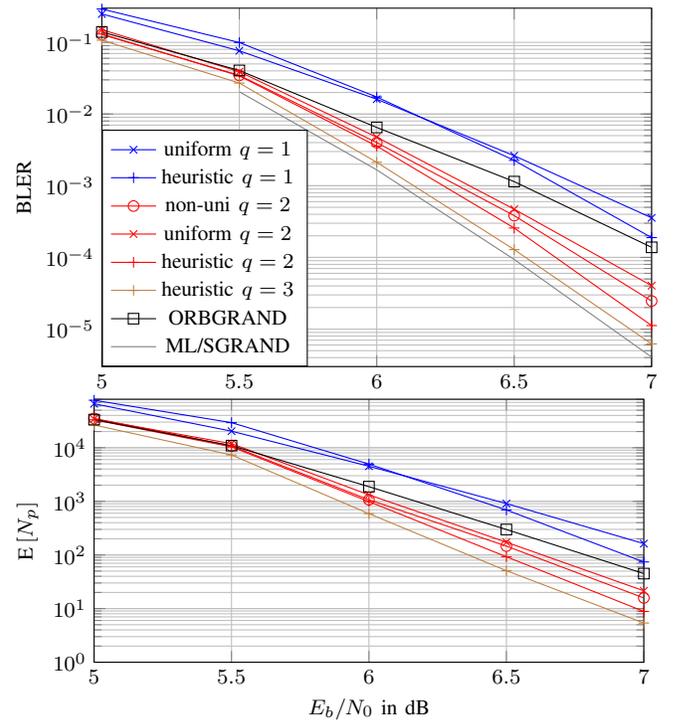
\begin{figure}[t]
	\centering
	\footnotesize
	\begin{tikzpicture}[scale=1]
\begin{semilogyaxis}[
legend style={at={(0,0)},anchor= south west},
ymin=0.000003,
ymax=0.3,
width=3.5in,
height=2.5in,
grid=both,
xmin = 5.0,
xmax = 7.0,
ylabel = BLER
]

\addplot[blue, mark = x]
table[x=snr,y=fer]{snr fer
5 0.24893
5.5 0.076299
6 0.016109
6.5 0.0026218
7 0.00035688
};\addlegendentry{uniform $q=1$}

\addplot[blue, mark = +]
table[x=snr,y=fer]{snr fer
5 0.29185
5.5 0.099567
6 0.017299
6.5 0.0022594
7 0.00018896
};\addlegendentry{heuristic $q=1$}

\addplot[red, mark = o]
table[x=snr,y=fer]{snr fer
5 0.12876
5.5 0.034632
6 0.0039952
6.5 0.00038299
7 2.4647e-05
};\addlegendentry{non-uni $q=2$}

\addplot[red, mark = x]
table[x=snr,y=fer]{snr fer
5 0.15021
5.5 0.038961
6 0.0047603
6.5 0.00047039
7 4.0331e-05
};\addlegendentry{uniform $q=2$}

\addplot[red, mark = +]
table[x=snr,y=fer]{snr fer
5 0.1309
5.5 0.034091
6 0.0035702
6.5 0.00025704
7 1.1203e-05
};\addlegendentry{heuristic $q=2$}

\addplot[brown, mark = +]
table[x=snr,y=fer]{snr fer
5 0.1073
5.5 0.027056
6 0.0021251
6.5 0.00012852
7 6.2239e-06
};\addlegendentry{heuristic $q=3$}

\addplot[black, mark = square]
table[x=snr,y=fer]{snr fer
5 0.13948
5.5 0.040584
6 0.0065029
6.5 0.0011413
7 0.0001383
};\addlegendentry{ORBGRAND}

\addplot[gray]
table[x=snr,y=fer]{snr fer
5.5 2.048e-02
6 1.664e-03
6.5 9.366e-05
7 4.063e-06
};\addlegendentry{ML/SGRAND}

\end{semilogyaxis}
\end{tikzpicture}

\begin{tikzpicture}[scale=1]
\begin{semilogyaxis}[
ytick={1,10,100,1000,10000,100000},
ymin=1,
ymax=80000,
width=3.5in,
height=2in,
grid=both,
xmin = 5.0,
xmax = 7.0,
xlabel = $E_b/N_0$ in dB,
ylabel = {$\text{E}\left[N_p\right]$}
]

\addplot[blue, mark = x]
table[x=snr,y=fer]{snr fer
5 66018
5.5 20324
6 4510.2
6.5 908.44
7 162.61
};

\addplot[blue, mark = +]
table[x=snr,y=fer]{snr fer
5 76780
5.5 29303
6 4945
6.5 689.95
7 74.043
};

\addplot[red, mark = o]
table[x=snr,y=fer]{snr fer
5 35202
5.5 10756
6 1067.9
6.5 146.15
7 15.806
};

\addplot[red, mark = x]
table[x=snr,y=fer]{snr fer
5 34240
5.5 11903
6 1320.9
6.5 172.65
7 21.433
};

\addplot[red, mark = +]
table[x=snr,y=fer]{snr fer
5 33136
5.5 10364
6 987.65
6.5 92.497
7 8.8528
};

\addplot[brown, mark = +]
table[x=snr,y=fer]{snr fer
5 26509
5.5 7310.3
6 588.78
6.5 50.859
7 5.3773
};

\addplot[black, mark = square]
table[x=snr,y=fer]{snr fer
5 33372
5.5 10922
6 1877.4
6.5 298.1
7 44.93
};

\end{semilogyaxis}
\end{tikzpicture}
	\caption{$(511,493)$ BCH code under DSGRANDs and ORBGRAND.}
	\label{fig:bch_511}
\end{figure}

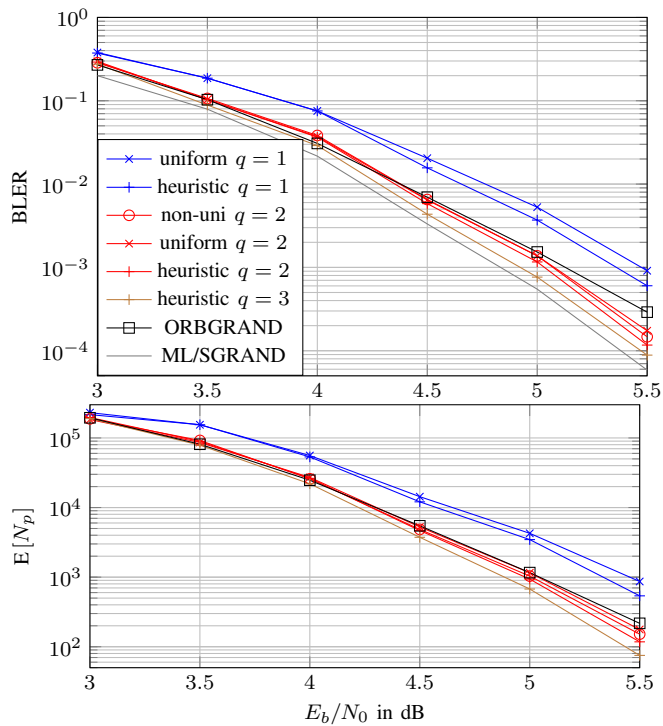
\begin{figure}[t]
	\centering
	\footnotesize
	\begin{tikzpicture}[scale=1]
\begin{semilogyaxis}[
legend style={at={(0,0)},anchor= south west},
ymin=0.00005,
ymax=1,
width=3.5in,
height=2.5in,
grid=both,
xmin = 3,
xmax = 5.5,
ylabel = BLER
]

\addplot[blue, mark = x]
table[x=snr,y=fer]{snr fer
3 0.37912
3.5 0.1865
4 0.07545
4.5 0.020445
5 0.0052794
5.5 0.00091259
};\addlegendentry{uniform $q=1$}

\addplot[blue, mark = +]
table[x=snr,y=fer]{snr fer
3 0.37363
3.5 0.1865
4 0.07545
4.5 0.01566
5 0.0036979
5.5 0.00060248
};\addlegendentry{heuristic $q=1$}

\addplot[red, mark = o]
table[x=snr,y=fer]{snr fer
3 0.28571
3.5 0.10657
4 0.038305
4.5 0.0065251
5 0.0013772
5.5 0.00014708
};\addlegendentry{non-uni $q=2$}

\addplot[red, mark = x]
table[x=snr,y=fer]{snr fer
3 0.2967
3.5 0.10302
4 0.037144
4.5 0.0064381
5 0.0013772
5.5 0.00017366
};\addlegendentry{uniform $q=2$}

\addplot[red, mark = +]
table[x=snr,y=fer]{snr fer
3 0.29121
3.5 0.10302
4 0.036564
4.5 0.0058291
5 0.001163
5.5 0.00011695
};\addlegendentry{heuristic $q=2$}

\addplot[brown, mark = +]
table[x=snr,y=fer]{snr fer
3 0.27472
3.5 0.08881
4 0.029019
4.5 0.0043501
5 0.00076513
5.5 8.8601e-05
};\addlegendentry{heuristic $q=3$}

\addplot[black, mark = square]
table[x=snr,y=fer]{snr fer
3 0.26923
3.5 0.10302
4 0.03076
4.5 0.0069602
5 0.0015303
5.5 0.00029061
};\addlegendentry{ORBGRAND}

\addplot[gray]
table[x=snr,y=fer]{snr fer
3 2.000e-01
3.5 7.924e-02
4 2.165e-02
4.5 3.324e-03
5 5.485e-04
5.5 5.8623e-05
};\addlegendentry{ML/SGRAND}

\end{semilogyaxis}
\end{tikzpicture}

\begin{tikzpicture}[scale=1]
\begin{semilogyaxis}[
ymin=50,
ymax=3e+05,
width=3.5in,
height=2in,
grid=both,
xmin = 3,
xmax = 5.5,
xlabel = $E_b/N_0$ in dB,
ylabel = {$\text{E}\left[N_p\right]$}
]

\addplot[blue, mark = x]
table[x=snr,y=fer]{snr fer
3 2.3106e+05
3.5 1.547e+05
4 55872
4.5 14332
5 4263.5
5.5 866.15
};

\addplot[blue, mark = +]
table[x=snr,y=fer]{snr fer
3 2.1636e+05
3.5 1.5495e+05
4 53395
4.5 12079
5 3464.8
5.5 538.73
};

\addplot[red, mark = o]
table[x=snr,y=fer]{snr fer
3 1.8786e+05
3.5 92127
4 25889
4.5 4851.2
5 1035.2
5.5 151.76
};

\addplot[red, mark = x]
table[x=snr,y=fer]{snr fer
3 1.8395e+05
3.5 86085
4 27071
4.5 5257.8
5 1155.5
5.5 177.16
};

\addplot[red, mark = +]
table[x=snr,y=fer]{snr fer
3 1.9674e+05
3.5 89177
4 26290
4.5 4674.8
5 937.33
5.5 117.61
};

\addplot[brown, mark = +]
table[x=snr,y=fer]{snr fer
3 1.8975e+05
3.5 78826
4 21943
4.5 3736.4
5 675.22
5.5 75.099
};

\addplot[black, mark = square]
table[x=snr,y=fer]{snr fer
3 1.9492e+05
3.5 81253
4 24696
4.5 5465.6
5 1161.8
5.5 216.93
};

\end{semilogyaxis}
\end{tikzpicture}
	\caption{$(31,27)$ RS code on $\text{GF}\left(2^5\right)$ under DSGRANDs and ORBGRAND.}
	\label{fig:rs_155}
\end{figure}

\begin{figure}[t]
	\centering
	\footnotesize
	\begin{tikzpicture}[scale=1]
\begin{semilogyaxis}[
legend style={at={(0.5,-0.2)},anchor=north,draw=none,/tikz/every even column/.append style={column sep=1mm},cells={align=left}},
legend columns=2,
legend cell align=left,
ymin=0.00001,
ymax=0.3,
width=3.5in,
height=3.0in,
grid=both,
xmin = 3.5,
xmax = 5.5,
xlabel = $E_b/N_0$ in dB,
ylabel = BLER
]


\addplot[blue, mark = +]
table[x=snr,y=fer]{snr fer
3.5 0.11642
4 0.034375
4.5 0.0079216
5 0.0015811
5.5 0.00025685
};\addlegendentry{heuristic $q=1$}



\addplot[blue, mark = x]
table[x=snr,y=fer]{snr fer
3.5 0.057173
4 0.014375
4.5 0.0025085
5 0.00036611
5.5 4.5604e-05
};\addlegendentry{heuristic $q=2$}

\addplot[blue, mark = o]
table[x=snr,y=fer]{snr fer
3.5 0.051975
4 0.010417
4.5 0.0016503
5 0.00020116
5.5 2.6209e-05
};\addlegendentry{heuristic $q=3$}

\addplot[black, mark = square]
table[x=snr,y=fer]{snr fer
3.5 0.054054
4 0.01375
4.5 0.0022444
5 0.0004506
5.5 7.8103e-05
};\addlegendentry{ORBGRAND}

\addplot[gray]
table[x=snr,y=fer]{snr fer
5 0.00015058
5.5 1.925e-05
};\addlegendentry{ML/SGRAND}

\addplot[red, dashed, mark options=solid,mark = +]
table[x=snr,y=fer]{snr fer
3.5 0.250626566416040
4 0.0943396226415094
4.5 0.0315258511979823
5 0.00647039792947266
5.5 0.000907754034966686
};\addlegendentry{SCL, $L=8$}

\addplot[red, dashed, mark options=solid,mark = x]
table[x=snr,y=fer]{snr fer
3.5 0.130718954248366
4 0.0452284034373587
4.5 0.00898109479545557
5 0.00137596059248863
5.5 0.000185000000000000
};\addlegendentry{SCL, $L=32$}

\addplot[red, dashed, mark options=solid,mark = o]
table[x=snr,y=fer]{snr fer
3.5 0.116009280742459
4 0.0308356460067838
4.5 0.00511980339954946
5 0.000678251944372163
5.5 7.70829251685386e-05
};\addlegendentry{SCL, $L=64$}

\addplot[red, dashed, mark options=solid,mark = square]
table[x=snr,y=fer]{snr fer
3.5 0.0654450261780105
4 0.0180538003249684
4.5 0.00325796572620056
5 0.000328633869006540
5.5 3.18053707038284e-05
};\addlegendentry{SCL, $L=128$}

\addplot[brown, mark = +]
table[x=snr,y=fer]{snr fer
3.5 0.171232876712329
4 0.0645994832041344
4.5 0.0139662234998203
5 0.00260661036388281
5.5 0.000438406317300790
};\addlegendentry{SCL, $L=128,q=1$}

\addplot[brown, mark = x]
table[x=snr,y=fer]{snr fer
3.5 0.102459016393443
4 0.0301568154402895
4.5 0.00628851716765187
5 0.000994450963622984
5.5 0.000117971271635931
};\addlegendentry{SCL, $L=128,q=2$}

\addplot[brown, mark = o]
table[x=snr,y=fer]{snr fer
3.5 0.0805152979066023
4 0.0223713646532439
4.5 0.00405248447204969
5 0.000501811539658166
5.5 5.82543971819432e-05
};\addlegendentry{SCL, $L=128,q=3$}

\addplot[green, mark = +]
table[x=snr,y=fer]{snr fer
3.5 0.243902439024390
4 0.106382978723404
4.5 0.0297619047619048
5 0.00914578379367112
5.5 0.00179372197309417
};\addlegendentry{SCL, $L=32,q=1$}

\addplot[green, mark = x]
table[x=snr,y=fer]{snr fer
3.5 0.214592274678112
4 0.0665778961384820
4.5 0.0215796288303841
5 0.00415834996673320
5.5 0.000616834651303372
};\addlegendentry{SCL, $L=32,q=2$}

\addplot[green, mark = o]
table[x=snr,y=fer]{snr fer
3.5 0.151515151515152
4 0.0454959053685168
4.5 0.0116252034410602
5 0.00177733541874022
5.5 0.000192769597921173
};\addlegendentry{SCL, $L=32,q=3$}

\addplot[brown, mark = *, line width=0.5mm]
table[x=snr,y=fer]{snr fer
5.5 5.82543971819432e-05
};

\addplot[black, mark = *, line width=0.5mm]
table[x=snr,y=fer]{snr fer
5.5 7.8103e-05
};

\addplot[blue, mark = *, line width=0.5mm]
table[x=snr,y=fer]{snr fer
5.5 2.6209e-05
};

\end{semilogyaxis}
\end{tikzpicture}
	\caption{$(128,106+11)$ polar code with $11$ bits CRC ($\texttt{0x710}$) under \acp{DSGRAND} with $q\in\{1,2,3\}$, ORBGRAND and CA-SCL decoders with $L\in\{8,32,64,128\}$. The detailed computational complexity and memory requirement of three decoders at $E_b/N_0=5.5~\text{dB}$ (the highlighted points) are shown in Table~\ref{tab:complexity55}.}
	\label{fig:polarcrc_128_106}
\end{figure}
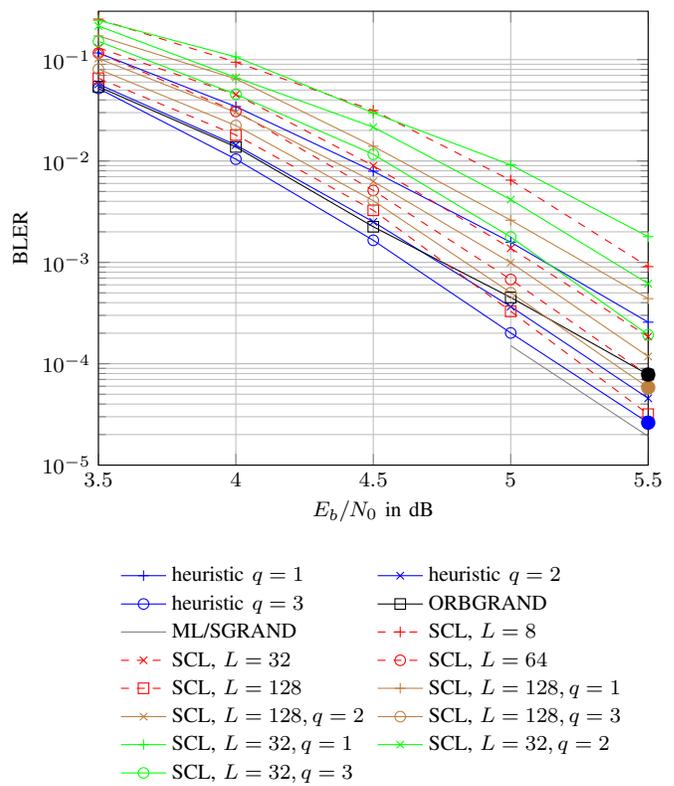

Figures~\ref{fig:bch_31_16}-\ref{fig:rs_155} present the decoding performance and average complexity (in number of guesses) of several \ac{BCH} and \ac{RS} codes of different length and code rate with \ac{DSGRAND}, basic \ac{ORBGRAND} and \ac{SGRAND}. We have following observations,
\begin{itemize}
    \item Basic \ac{ORBGRAND} performs close to \ac{SGRAND} at moderate \ac{SNR}, but is less precise at higher \ac{SNR}, which matches our analysis in Section~\ref{sec:orbgrand}.
    \item A \ac{GRAND} with \emph{better} performance requires \emph{fewer} guesses in average, since the computational power is used in a more efficient way.
    \item At moderate \ac{SNR}, mutual information (or $R_\text{LM}$) maximizing non-uniform quantizer provides the best performance. Interestingly, the heuristic by eq.~\eqref{eq:heuristic} outperforms the mutual information maximizing quantizer at high \ac{SNR}.
    \item The $3$ bits quantizer with heuristic approach provides near-\ac{ML} performance (within $0.1~\text{dB}$ loss) for all operating points.
\end{itemize}

Figure~\ref{fig:polarcrc_128_106} shows the performance of the \acp{GRAND} and \ac{CA-SCL} decoders. We consider a $(128,106+11)$ \ac{5G} polar code concatenated with $11$ bits \ac{CRC}. With a $3$ bits quantizer, proposed \ac{DSGRAND} performs $\approx 0.2~\text{dB}$ better than a \ac{CA-SCL} decoder with list size $L=128$.

\subsection{Computational complexity}
The main budget of computational power for a \ac{GRAND} is the parity check, which requires $n_p\cdot n\cdot (n-k)$ exclusive or operations in fully parallel implementations. Note that the complexity of parity check is much smaller in serial implementations, since the parity check function would return a $\texttt{false}$ once a unsatisfied equation is found, instead of check all $n-k$ equations. Additionally, both \ac{ORBGRAND} and \ac{DSGRAND} need some real additions and compare operations. \Ac{ORBGRAND} requires a size-$n$ sorting operation to get the rank information of the reliabilities. 

\subsection{Memory requirement}
For a \ac{DSGRAND}, the required memory space is the \ac{DP} table and the stack $\Xi$, i.e., $n\cdot\left(s_\text{max}+ |\Xi|\right)$ bits and $3\cdot|\Xi|$ rela numbers. In the our experiments, the size of $\Xi$ never exceeds $12$ for all codes, all quantization levels and all operating points. The selection of maximum score $s_\text{max}$ could be found in Section~\ref{sec:Smax} and is \ac{SNR}-related. We use $s_\text{max}=38$ for the \ac{DSGRAND} ($q=3$) at $E_b/N_0=5.5~\text{dB}$ shown in Figure~\ref{fig:polarcrc_128_106} and Table~\ref{tab:complexity55}. On the other side, an \ac{CA-SCL} requires $n\cdot L$ real numbers and $2n\cdot L$ bits memory space.

Table~\ref{tab:complexity55} shows the decoding complexity in (average) number of required operations and memory requirement of three decoders in Figure~\ref{fig:polarcrc_128_106} at $E_b/N_0=5.5~\text{dB}$. For \ac{DSGRAND} and \ac{ORBGRAND}, we use the fully parallel implementation, i.e., $n_p\cdot n\cdot (n-k)$ exclusive or operations are required. 

In Table~\ref{tab:complexity55}, we also provide the complexity and space score for the decoders, i.e., we assume that a real addition operation is eight times more complex than an exclusive or, a compare operation is six times more complex than an exclusive or, and the complexity of a size-$n$ sort (and find median) operation is close to $6\cdot n\cdot \log_2 n$ exclusive or operations. Under this assumption, the proposed \ac{DSGRAND} requires only a few percent of the complexity and space of a \ac{CA-SCL} decoder at $E_b/N_0=5.5~\text{dB}$.


\begin{table*}[t]
    \centering
    \setlength{\tabcolsep}{10pt}
    \renewcommand\arraystretch{1.5}
    \caption{Complexity in average number of operations and memory requirement of the decoders in Figure~\ref{fig:polarcrc_128_106}.}
\begin{tabular}{ |c||c|c||c|c| }
\hline
$E_b/N_0=5.5~\text{dB}$ & type of operation & \# of operations & type of data & required space \\
\hline
\hline
\multirow{6}{*}{\shortstack{CA-SCL, $L=128$\\ $\text{BLER}=5.825\times 10^{-5}$\\ \textcolor{red}{$\text{complexity score}=3226875$} \\ \textcolor{blue}{$\text{space score}=163840$}} } & exclusive or \textcolor{red}{$(1)$} & $55245$ &  &  \\ 
\cline{2-3}
& real addition \textcolor{red}{$(8)$} & $82143$& real number \textcolor{blue}{$(8)$} & $16384$\\ 
\cline{2-3}
& size-$2L$ find median \textcolor{red}{$(12288)$} & $117$&  &  \\
\cline{2-5}
& compare \textcolor{red}{$(6)$} & $69185$&   &  \\
\cline{2-3}
& real number copy \textcolor{red}{$(8)$} & $66168$& bit \textcolor{blue}{$(1)$} & $32768$\\
\cline{2-3}
& bit copy \textcolor{red}{$(1)$} & $132336$&  &  \\
\hline
\hline
\multirow{4}{*}{\shortstack{Basic ORBGRAND\\ $\text{BLER}=7.810\times 10^{-5}$\\ \textcolor{red}{$\text{complexity score}=489866$} \\ \textcolor{blue}{$\text{space score}=1152$}}} & exclusive or \textcolor{red}{$(1)$}& $449771$& real number \textcolor{blue}{$(8)$} & $128$ \\ 
\cline{2-3}
& size-$n$ sort \textcolor{red}{$(5397)$} & $1$& & \\ 
\cline{2-5}
& compare \textcolor{red}{$(6)$}& $3660$&  bit \textcolor{blue}{$(1)$} & $128$ \\ 
\cline{2-3}
& real addition \textcolor{red}{$(8)$} & $1592.2$&  &  \\ 
\hline
\hline
\multirow{5}{*}{\shortstack{DSGRAND, $q=3$, $s_\text{max}=38$\\ $\text{BLER}=2.621\times 10^{-5}$\\ \textcolor{red}{$\text{complexity score}=181316$} \\ \textcolor{blue}{$\text{space score}=7312$}}} & exclusive or \textcolor{red}{$(1)$} & $168011$&  & \\ 
\cline{2-3}
& compare \textcolor{red}{$(6)$} & $835$&  real number \textcolor{blue}{$(8)$} & $114$ \\ 
\cline{2-3}
& real addition \textcolor{red}{$(8)$} & $699.1$&  &  \\ 
\cline{2-5}
& real number copy \textcolor{red}{$(8)$} & $53.34$& bit \textcolor{blue}{$(1)$} & $6400$\\
\cline{2-3}
& bit copy \textcolor{red}{$(1)$} & $2275.8$&  &  \\
\hline
\end{tabular}
\vspace{10pt}
    \label{tab:complexity55}
\end{table*}

\section{Conclusions}\label{sec:conclusions}
Universal decoders have many practical benefits, including the ability to support
an arbitrary number of distinct codes with  one efficient piece of software or hardware, enabling the best choice of code for each application and future
proofing devices to the introduction of new codes. Particularly as new applications drive demand for shorter, higher rate error correcting codes, \Ac{GRAND} is a promising approach to realising this possibility. \Ac{GRAND} with soft input information handles codes for which there are only hard detection decoders, such as \ac{BCH} codes, or no established error correcting decoder, such as \acp{CRC}~\cite{an21, LiangLiu2021}, to soft detection decoding. Consistently with theoretical predictions~\cite{Coffey90}, results from \ac{GRAND} algorithms show that decoding performance is largely driven by the quality of the decoder rather than that of the code, and that good \ac{CRC} codes and codes selected at random can offer comparable performance
to highly structured ones~\cite{an20, an21, duffy2021ordered, Riaz21, Papadopoulou21}. 


In this work, we have analyzed the achievable rate of \ac{ORBGRAND}, \ac{ORBGRAND}, which requires a size-$n$ sorting operation on reliabilities, in effect at most  $\lceil\log_2(n)\rceil$ bits of  quantization. Our approach is based on order statistics and mismatched decoding. We have shown that multi-line \ac{ORBGRAND} can be capacity achieving at all \ac{SNR} points with precise enough curve fitting through multi-line models. Its complexity is also far lower than \ac{CA-SCL}, both in terms of and computation and, in particular, memory. 

We have  introduced \ac{DSGRAND}, that uses a conventional quantizer. \Ac{DSGRAND} inherits all the desirable features of \ac{GRAND} algorithms, including universality, parallelizability and reduced algorithmic effort as \ac{SNR} increases.  It can avail of any level of quantization, provides improved error correction performance as quantization level increases, and obviates the need for a sorter. The numerical result shows that \ac{DSGRAND} outperforms basic \ac{ORBGRAND} at high \ac{SNR} and performs within $0.25~\text{dB}$ and $0.1~\text{dB}$ close to \ac{ML} decoding with $2$ and $3$ bits quantizers, respectively. \ac{DSGRAND} outperforms \ac{CA-SCL} with sharply lower complexity and memory requirements. With respect to basic \ac{ORBGRAND}, it has somewhat lower complexity but somewhat higher memory requirements.


\bibliographystyle{IEEEtran}
\bibliography{ORBGRAND,grand}

\end{document}